\newcommand{\R}{\mathbb R}
\newcommand{\Omit}[1]{}
\newcommand{\up}{ \uparrow}
\newcommand{\down}{\downarrow}
\newcommand{\Err}{ {\mbox{\em Err}}}
\newcommand{\Gen}{ {\mbox{\em Gen}}}
\begin{document}

 \author{
    Chengfang Fang \hspace{3.5cm}  Ee-Chien Chang
 }
 \institute{
School of Computing \\
National University of Singapore\\
    {\tt c.fang@comp.nus.edu.sg} \hspace{0.5cm} {\tt
    changec@comp.nus.edu.sg}
 }
\titlerunning{Publishing Location Dataset Differential Privately with Isotonic Regression}

\title{Publishing Location Dataset Differential Privately with Isotonic Regression}

\maketitle
\begin{abstract}
We consider the problem of publishing location datasets, in particular 2D spatial pointsets, in a differentially private manner.  Many existing mechanisms focus on frequency counts of the points in some a priori partition of the domain that is difficult to determine.
We propose an approach that adds noise directly to the  point, or to a group of neighboring points. Our approach is based on the observation that, the sensitivity of sorting, as a function on sets of real numbers, can be bounded. Together with isotonic regression, the dataset can be accurately reconstructed. To extend the mechanism to higher dimension, we employ locality preserving function to map the dataset to a bounded interval. Although there are fundamental limits on the performance of locality preserving functions, fortunately, our problem only requires distance preservation in the ``easier'' direction, and the well-known Hilbert space-filling curve suffices to provide high accuracy. The publishing process is simple from the publisher's point of view: the publisher just needs to map the data, sort them, group them, add Laplace noise and publish the dataset. The only  parameter to determine is the group size which can be chosen based on predicted generalization errors. Empirical study shows that the published dataset can also exploited to answer other queries, for example, range query and median query, accurately.
\end{abstract}

\section{Introduction}

The popularity of personal devices equipped with location sensors leads to large amount of location data being gathered. Such data contain rich information and would be valuable if they can be shared and published.  As the data may reveal location of identified individual, it is important to anonymize the data before publishing. The recently developed notion of differential privacy \cite{dwork2006differential} provides a strong form of privacy assurance regardless of the background information held by the adversaries. Such assurance is important as many case studies and past events have shown that a seemingly annoymized dataset together with additional knowledge held by the adversary could reveal information on individuals.

Most studies on differential privacy focus on publishing statistical values, for instance, k-means\cite{blum2005practical}, private coreset\cite{feldman2009private}, and median of the database\cite{nissim2007smooth}. Publishing  specific statistics or data-mining results is meaningful if the publisher knows what the public specifically want. However, there are situations where the publishers want to give the public greater flexibility in analyzing and exploring the data, for example, using different visualization  techniques. In such scenarios, it is desired to {\em ``publish data, not the data mining result''}~\cite{fung2010privacy}.

\begin{figure}[!h] \centering
\includegraphics[width=0.5\textwidth]{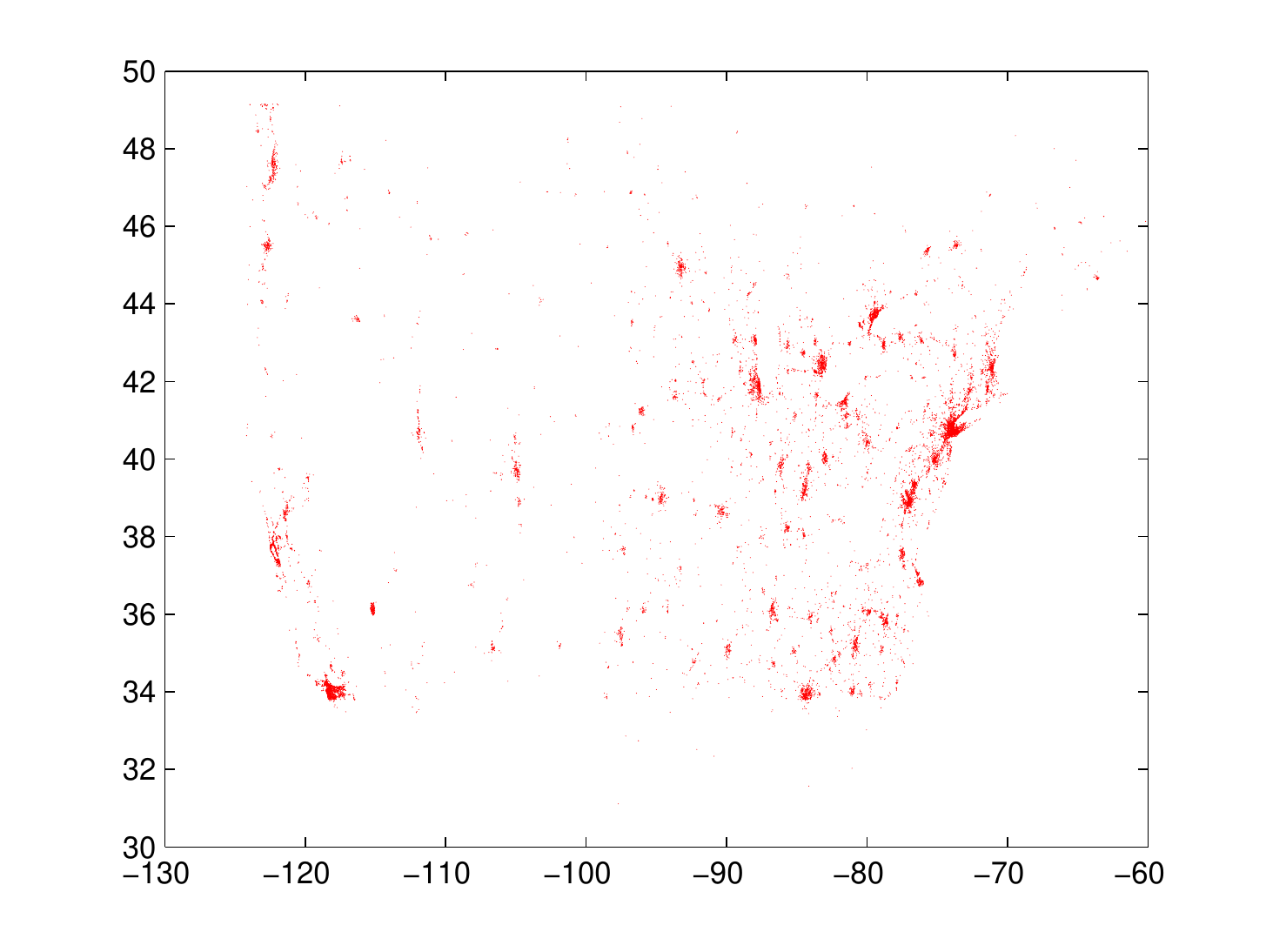}
\caption{Twitter location data cropped at the North America region.  To avoid clogging, only 10\% of the points (randomly chosen) are plotted.}
\label{fig:NAorigindata}
\end{figure}

In this paper, we consider the problem of publishing location data, or other low dimensional data in a differential private manner.  An example is shown in Fig. \ref{fig:NAorigindata} which depicts the locations of 183,072 Twitter users in North America~\cite{website}, and Fig. \ref{fig:originaldata_1d} shows a sequence of sorted real number obtained by mapping the points in Fig. \ref{fig:NAorigindata} into the unit interval. We proposed a mechanism based on the observation that sorting, as a function that takes in a set of real numbers from the unit interval, interestingly has sensitivity one (Theorem \ref{thm:sensofsort}).
Hence $\epsilon$-differential privacy can be achieved by adding Laplace noise with a scale parameter $1/\epsilon$ directly  to the sorted sequence.  Fig. \ref{fig:isotonicwithoutgroup} shows such  noisy data by adding noise to the curve in  Fig. \ref{fig:originaldata_1d}.  Although  seemingly noisy, as the original sequence is sorted,  there are dependencies among them to be exploited. Fig. \ref{fig:reconstructedwithouterror} shows a reconstructed sequence using isotonic regression.

To further  reduce perturbation induced by the Laplace noise,  consecutive elements in the sorted sequence can be grouped. However, grouping introduces {\em generalization error}.  The amount of  generalization error in the ``worst case'' can be analytically determined, and together with the model of error induced by the Laplace noise,  the publisher can  choose an appropriate group size $k$ based on the privacy requirement $\epsilon$ and the total number of points $n$. For the example in Fig. 1, the group size determined  is $300$ and the corresponding published and reconstructed data are depicted in Fig. \ref{fig:isotonicwithgroup}. Fig. \ref{fig:compare2error} shows a comparison of the error of each points in the reconstructed sequence. After reconstruction, the inverse mapping is to be applied to the data. Our variant of isotonic regression outputs dataset with larger number of repetition, as shown in Fig. \ref{fig:mapusbeforepp}.  Fig. \ref{fig:reconstructed2d} shows the reconstructed pointset, with a post-processing that maps the repeating points to the surrounding area of the location. The post-processing is solely for the purpose of visualization. Fig. \ref{fig:eastcoast} shows a zoomed version region around New York City.

\begin{figure}[!h] \centering
\includegraphics[width=0.5\textwidth]{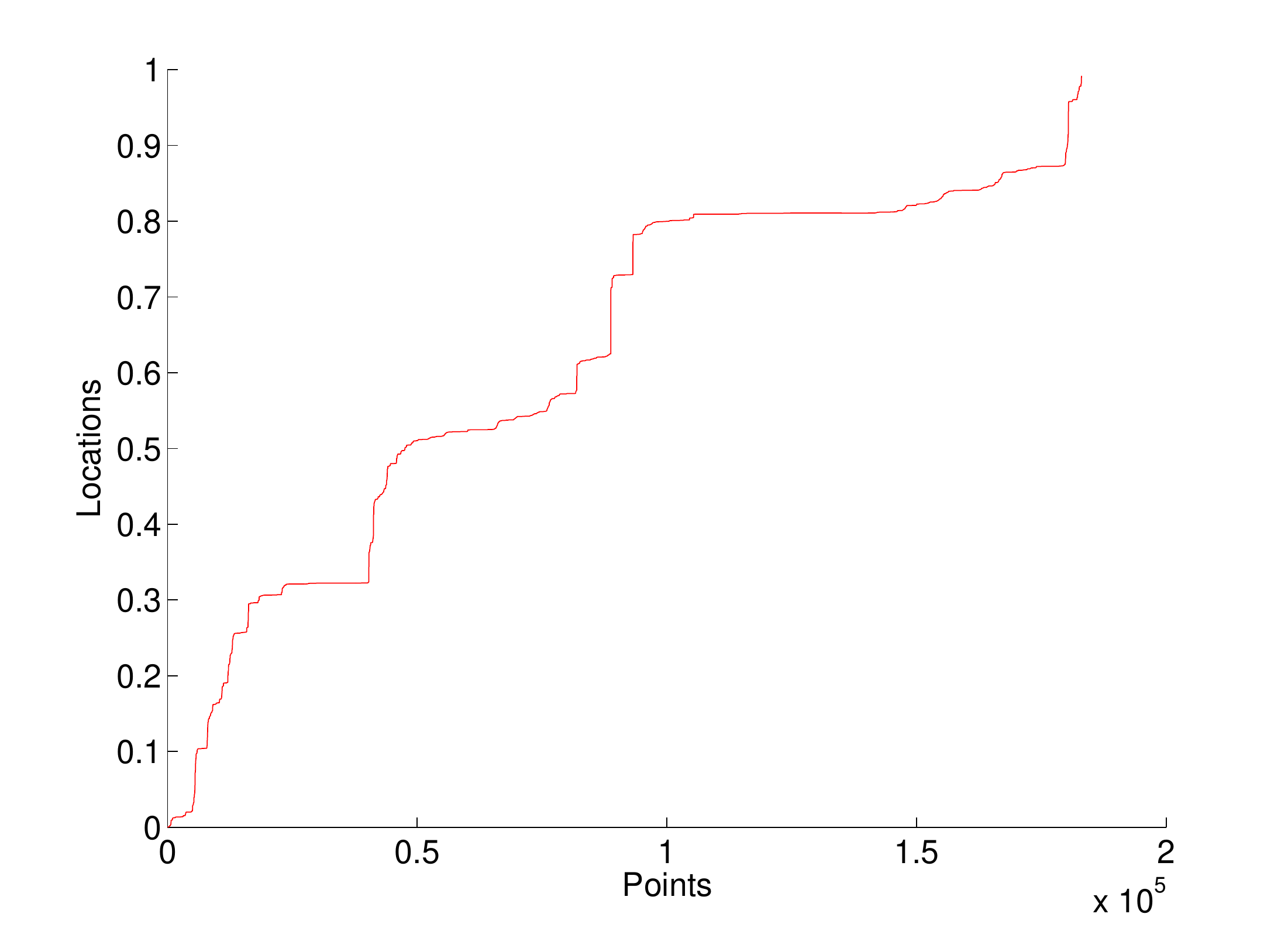}
\caption{The 1D data mapped from 2D Twitter locations (Fig. \ref{fig:NAorigindata}) to the unit interval $[0,1]$. }
\label{fig:originaldata_1d}
\end{figure}

\begin{figure}[!h] \centering
\includegraphics[width=0.5\textwidth]{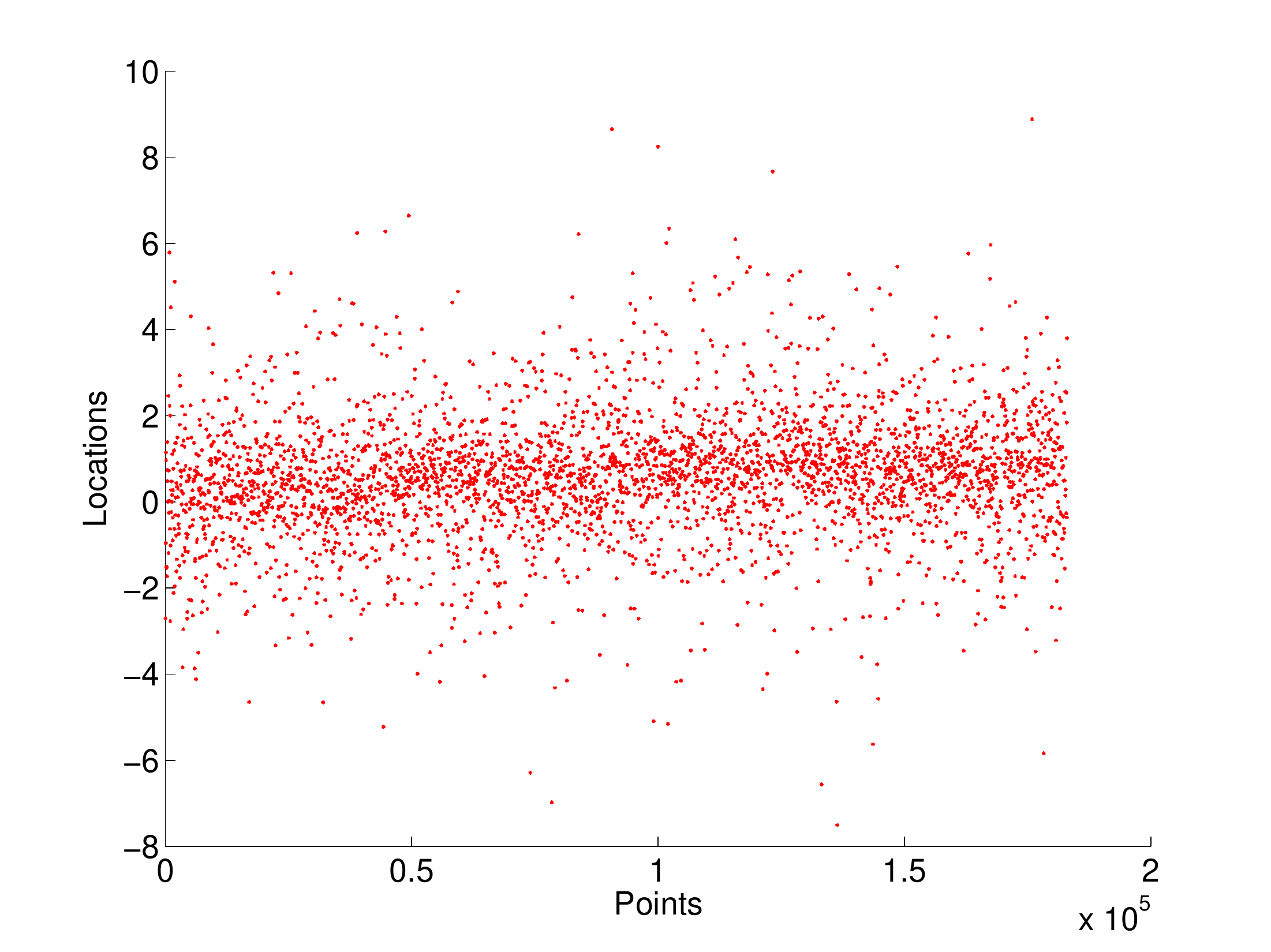}
\caption{The published noisy location data with group size $k=1$ and $\epsilon = 1$, that is, there is no grouping.  To avoid clogging, only 2\% of the published points (randomly chosen) are plotted.}
\label{fig:isotonicwithoutgroup}
\end{figure}

\begin{figure}[!h] \centering
\includegraphics[width=0.5\textwidth]{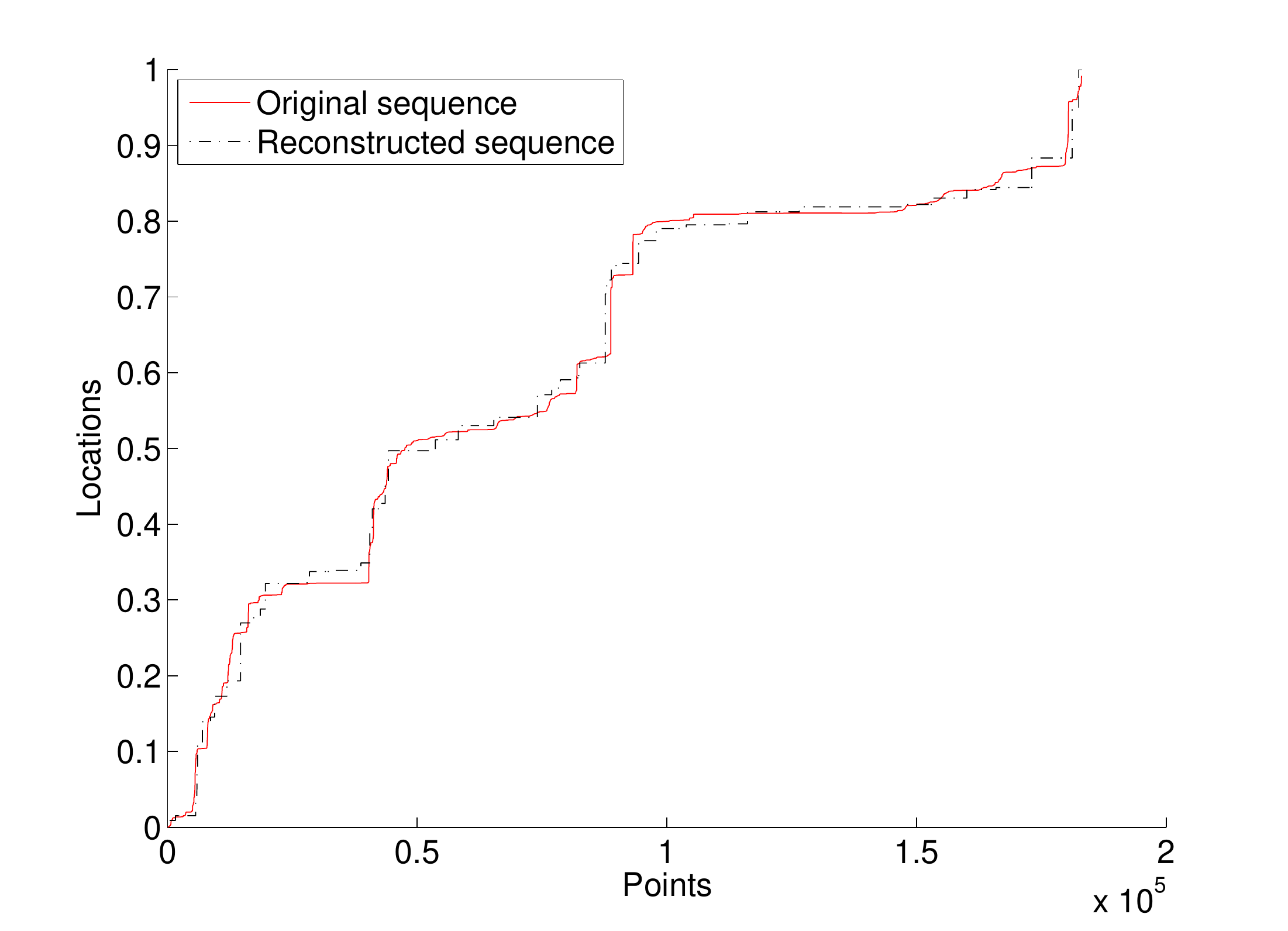}
\caption{The  reconstructed data obtained by performing isotonic regression on the published data as shown in Fig. \ref{fig:isotonicwithoutgroup},  plotted together with the original data. }
\label{fig:reconstructedwithouterror}
\end{figure}

\begin{figure}[!h] \centering
\includegraphics[width=0.5\textwidth]{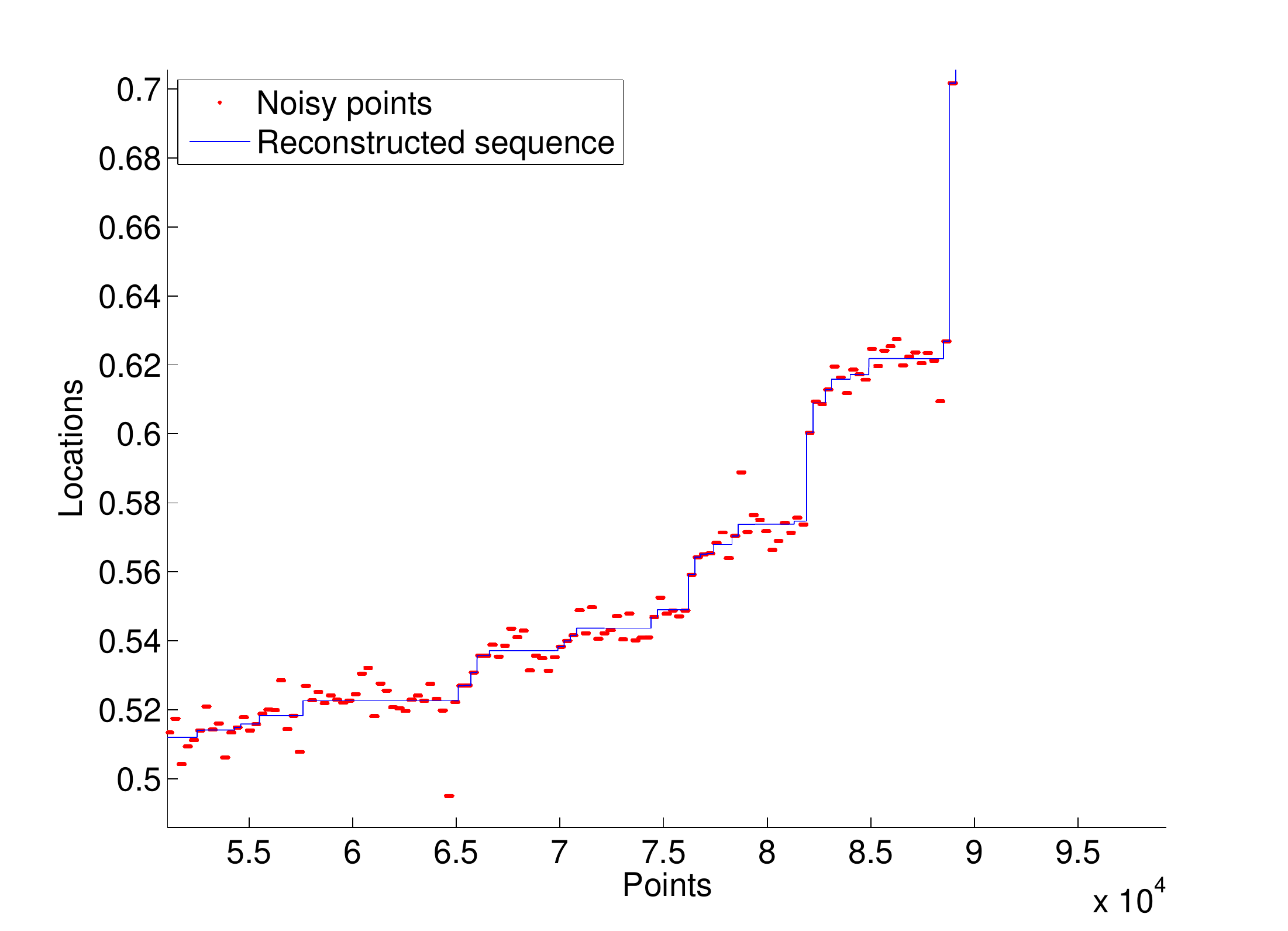}
\caption{The published noisy location data with grouping (group size $k=300$) and the corresponding reconstructed data through isotonic regression. The figure shows only the region from 0.5 to 0.7 in the unit interval.}
\label{fig:isotonicwithgroup}
\end{figure}

\begin{figure}[!h] \centering
\includegraphics[width=0.5\textwidth]{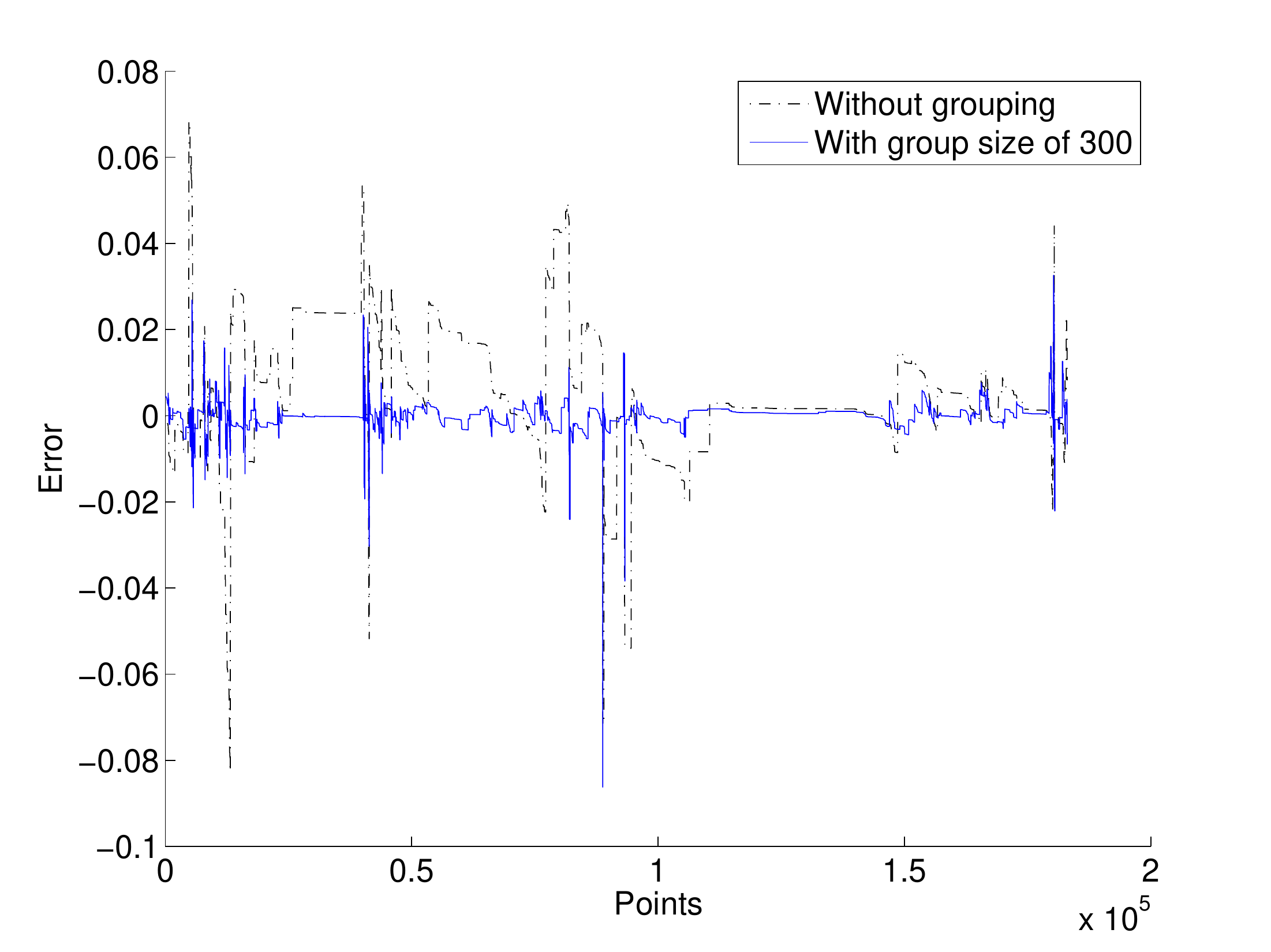}
\caption{Differences of the two reconstructed data from the original. The black dashed line is the displacement of reconstructed data without grouping (see Fig. \ref{fig:reconstructedwithouterror}), the blue solid line is the displacement of reconstructed data with group size $k=300$ (see Fig. \ref{fig:isotonicwithgroup}). }
\label{fig:compare2error}
\end{figure}

\begin{figure}[!h] \centering
\includegraphics[width=0.5\textwidth]{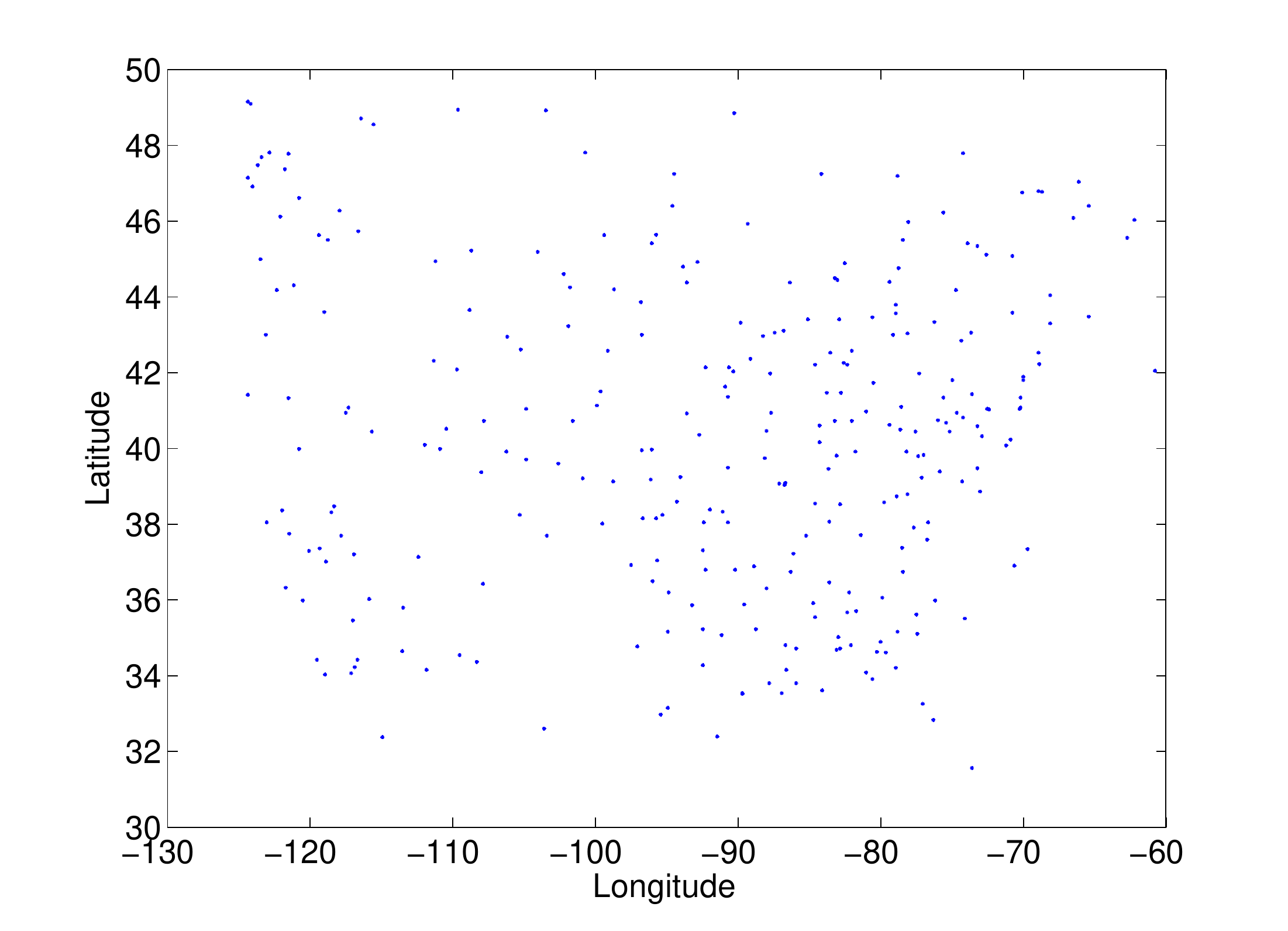}
\caption{Inverse mapped of the reconstructed data with $\epsilon = 5$. Each point in the figure represents a group of repeating points.}
\label{fig:mapusbeforepp}
\end{figure}

\begin{figure}[!h] \centering
\includegraphics[width=0.5\textwidth]{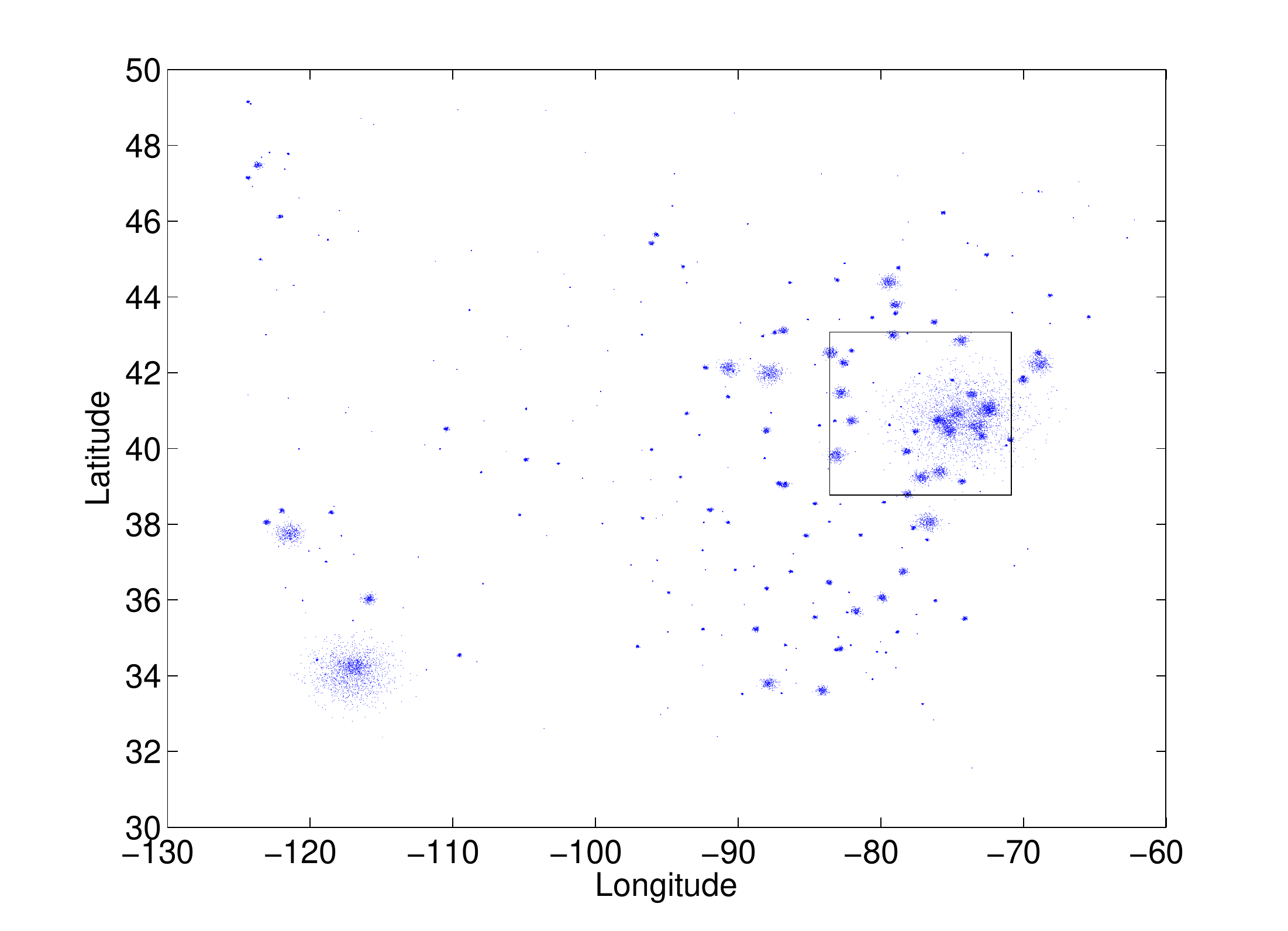}
\caption{Inverse mapped of the reconstructed data followed by post-processing. The post-processing ``diffuses'' a repeated point to its surrounding and is performed solely for visualization purpose. To avoid clogging, only 10\% of the points (randomly chosen) are plotted.}
\label{fig:reconstructed2d}
\end{figure}

\begin{figure}[!h] \centering
\includegraphics[width=0.5\textwidth]{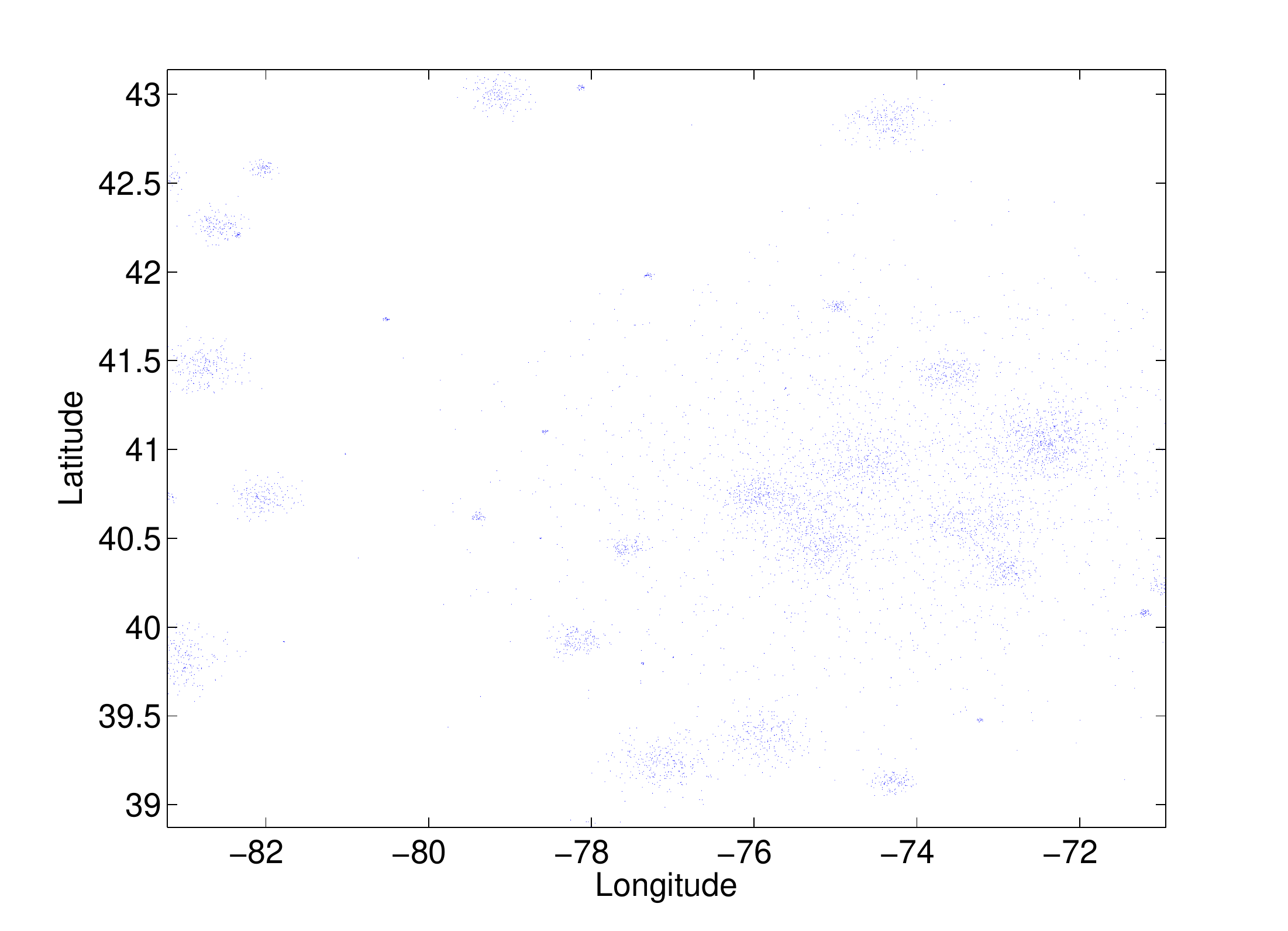}
\caption{A zoom-in of Fig. \ref{fig:reconstructed2d} to region within the indicated rectangle.}
\label{fig:eastcoast}
\end{figure}

Our choice of  the group size $k$ is determined by minimizing an error function which measures  the Earth-Mover-Distance  (EMD) of the original and reconstructed pointsets.  In one-dimension, the EMD of two equal-sized pointsets is simply the  $L_1$ distance between the two respective sorted sequences.  Although designed to minimize EMD,  the proposed mechanism achieves good accuracy w.r.t. other utilities.  Experimental studies show that the proposed mechanism achieves higher accuracy compared to the wavelet-based method for range queries\cite{xiao2010differential}, and outperforms the equi-width histogram w.r.t. the accuracy in estimation the underlying probability density function.

An advantage of the proposed mechanism is its simplicity from the publisher's viewpoint.  The publisher only has to map the points to the unit interval, sort them, add Laplace noise, and publish the results. By publishing the ``raw'' noisy data instead of the reconstructed data, users in the public are not confined to a particular inference techniques, and have the flexibility in using different variants of isotonic regression to suit their needs.

In contrast to an  equi-width histogram, the bins of an {\em equi-depth} histogram contains same number of elements, with their width varies. Intuitively,  the size of a bin is larger in location with lower ``density'' of points.  There are extensive studies on equi-depth histogram, and it is generally well accepted that an equi-depth histogram provides more useful statistical information\cite{piatetsky1984accurate} compare to equi-width histogram. However,  it is not clear how  to generate an equi-depth histogram while achieving differential privacy. Interestingly, grouping in our proposed mechanism naturally produces  equi-depth histograms: grouping of $k$ elements leads to a depth of $k$.

Our approach can be applied to obtain order-statistic, for example, median. Finding median  is challenging due to its large sensitivity.  Accurate mechanism can be derived by adding Laplace noise  proportional to the   smooth sensitivity\cite{nissim2007smooth} instead of the global sensitivity. However, computation of smooth sensitivity  takes $\Theta(n^2)$ time  where $n$ is the dataset's size. In contrast, our mechanism takes $O(n)$ time when the dataset is already sorted.  Experimental studies on datasets with $129$ elements\footnote{As it is computational intensive to compute smooth sensitivity, we are unable to repeat the experiments for significant larger $n$.}  suggest that  the proposed mechanism is less sensitive to a higher local sensitivity, or a small $\epsilon$.

The locality preserving map is a key component in our mechanism, taking the role of transforming the data points to the one-dimension space.   Although there are fundamental limits on locality preserving mapping, fortunately, our problem only requires preservation in the ``easier'' direction, i.e., any pair of neighbors in the one-dimensional domain are also neighbors in the multi-dimensional domain. The classic Hilbert space-filling curve suffices to provide high accuracy. For other types of non-spatial data, our techniques can be applied as long as an appropriate locality preserving mapping is available.\\

\noindent {\em Organization:\ \ \ }  We first describe some background materials in the next section (Section \ref{sec:background}).  In Section \ref{sec:proposedapproach} we present our main ideas and mechanism, and show that the proposed mechanism achieves differential privacy in Section \ref{sec:security}. Next, in section \ref{sec:analysis}, we formulate and analyze the noise incurred by the Laplace and the generalization noise. Based on the noise model, we derive a strategy to choose the group size.  In Section \ref{sec:experiments}, we compare our mechanism with three known mechanisms: (1) equi-width histogram, (2) wavelet-based method for range queries, and (3) smooth-sensitivity based median  finding.  In Section \ref{sec:extension}, we describe a few possible extensions, in particular, a hybrid of our mechanism with equi-width histogram.  Lastly, we describe related works in Section \ref{sec:relatedwork} and conclude in Section \ref{sec:conclusion}.

\section{Background}
\label{sec:background}
\subsection{Differential Privacy and Laplace Noise}
We treat a database as a multi-set  (i.e. a set with possibly repeating elements), and define two $D_1$ and $D_2$ to be neighbor when $D_2$ can be obtained from  $D_1$ by replacing one element, i.e.   $D_1 = \{x \} \cup  D_2  \setminus   \{ y \}$ for some $x$ and $y$. Let us call the above definition of neighborhood  the {\em replacement neighborhood}.

A randomized algorithm  (also known as a {\em mechanism}) $\mathcal{A}$  achieves $\epsilon$ differential privacy if, $$Pr[\mathcal{A}(D_1)\in S] \leq exp(\epsilon) \times
Pr[\mathcal{A}(D_2)\in S]$$ for all $S \subseteq$ Range($\mathcal{A}$), where Range($\mathcal{A}$) denotes the output range of the algorithm $\mathcal{A}$, and for any pair of  neighbouring datasets $D_1$ and $D_2$.

The replacement neighborhood we adopted is similar to the notion used by Nissim et al~\cite{nissim2007smooth}. Such variant  differs from the  well-adopted notion that  treats two datasets $D_1$, $D_2$ to be neighbors iff $D_2= D_1 \cup \{x\}$ or $D_2= D_1 \setminus \{ x\}$ for some $x$.  Note that  a mechanism that achieves differential privacy under replacement neighborhood can be converted to one that achieves privacy under  the well-adopted neighborhood.

For a function $f$ : $D \rightarrow \R^k$, the {\em sensitivity}  \cite{dwork2006differential} of $f$ is defined as $$\Delta{(f)} := \mbox{max} \| f(D_1) - f(D_2) \|_1 $$ where the maximum is taken over all  pairs of neighboring $D_1$ and $D_2$. It can be shown~\cite{dwork2006calibrating}   that the mechanism $\mathcal{A}$
\[    \mathcal{A}(D) =  f(D) + (Lap(\Delta{(f)}/\epsilon))^k \]
achieves $\epsilon$-differential privacy, where $(Lap(\Delta{(f)}/\epsilon))^k$   is a vector of  $k$ independently  and randomly chosen values from the Laplace distribution with standard deviation $2\Delta{(f)}/\epsilon$.

It is meaningless if the output of a mechanism  is simply noise, even if  privacy is achieved. The accuracy of a mechanism is measured by a {\em utility } function $u(X,y)$ that measures the quality of the output $y$  given the dataset is $X$. Alternatively, the utility can be measured by an {\em error} function that measures the distant of the output  from the ideal  output.

The notion of differential privacy has a useful {\em sequential composition property} \cite{mcsherry2009}:  if mechanisms ${\cal M}_1$ and ${\cal M}_2$  achieve $\epsilon_1$ and $\epsilon_2$-differential privacy respectively, then the combined mechanism of applying  ${\cal M}_1$ follows by ${\cal M}_2$  achieves
$\epsilon_1+\epsilon_2$ differential privacy.

\subsection{Isotonic Regression}
Given  a sequence  of $n$ real numbers $a_1,\ldots, a_n$, the problem of finding the least-square fit $x_1,\ldots, x_n$ subjected to the constraints $x_i \leq x_j$ for all $i<j\leq n$ is known as the isotonic regression. Formally, we wants to find the $x_1,\ldots, x_n$ that minimizes
$$ \sum_{i=1}^n ( x_i - a_i)^2,  \mbox{subjected to } x_i \leq x_j  \mbox{ for all } 1\leq  i <j \leq n $$
The unique solution  can be efficiently found using pool-adjacent-violators algorithms in $O(n)$ time \cite{grotzinger1984projections}. When minimizing w.r.t. $\ell$-1 norm, there is also an efficient $O(n \log n)$ algorithm\cite{quentin2000}. There are many variants of isotonic regression, for example, having a smoothness component in the objective function~\cite{wang2008,meyer2008}.

\subsection{Locality Preserving Map}
A  locality preserving map $T:  \R^d \rightarrow \R$  maps  $d$-dimensional points to real numbers while preserving ``locality''.   In this paper, we seek mapping whereby two neighboring points  in the one-dimensional range are also neighboring points in the $d$-dimensional domain. Specifically,  there is some constant $A$ s.t. for any $x, y \in \R^d$,
$$
      \| x - y\|_2  \leq A \cdot ( T(x)- T(y) )^{1/d}
$$
The well-known Hilbert curve achieves $ \| x - y \|_2 \leq 3\sqrt{| T(x) - T(y) |}-2$ for two sufficiently far-aparted points $x, y$ in $\R^2$\ \ \cite{gotsman1996metric}. Niedermeier et al. \cite{niedermeier1997towards} showed that with careful construction, the bound can be improved to $\sqrt{4|T(x)- T(y)|-2}$. In our construction, for simplicity, we use Hilbert curve.

Note that it is challenging in preserving locality in the other direction, that is,  any two neighboring points in  the $d$-dimensional domain are also neighboring points in the one-dimensional range.  Fortunately, in our problem, such property is not required.

\subsection{Datasets}
We conduct experiments on two datasets: locations of Twitter users~\cite{website} (herein called the Twitter location dataset) {\ and the dataset collected by Kalu{\v{z}}a et al.~\cite{kalu2010agent} (herein called Kalu{\v{z}}a's dataset).  The Twitter location dataset contains over 1 million Twitter users' data from the period of March 2006 to March 2010, among which around 200,000 tuples are labeled with location (represented in latitude and longitude) and most of the tuples are in the North America continent, concentrating in regions around the state of New York and California. Fig. \ref{fig:NAorigindata} shows the cropped region covering most of the North America continent. The cropped region contains 183,072 tuples. The Kalu{\v{z}}a's dataset contains 164,860 tuples collected from tags that continuously records the locations information of 5 individuals.

\section{Proposed Approach}
\label{sec:proposedapproach}

Given the privacy requirement $\epsilon$ and a dataset $D$ of size $n$, the publisher carries out the following:\\
\begin{enumerate}
\item[A1.] \label{stp:step11}  Maps each point in $D$ to a real number in the unit interval $[0,1]$ using a locality preserving map $T$. Let $T(D)$ be the set of transformed points. Determine a group size $k$ based on $n$ and $\epsilon$.  For clarity in exposition, let us assume that $k$ divides $n$.
\item[A2.] \label{stp:step12} Sorts $T(D)$.  Divides  the sorted sequence into groups of $k$ consecutive elements. 
 For each group, determines its sum.  Let the sums be $S=\langle s_1, \ldots, s_{n/k} \rangle$.
\item[A3.] \label{stp:step13} Publishes   $\widetilde{S}=S+ (\mbox{Lap} (\epsilon^{-1}) )^{(n/k)}$ and the group size $k$.\\
\end{enumerate}

An user in the public may extract information from the published data as follow:\\
\begin{enumerate}
\item[B1.] \label{stp:step21} Performs isotonic regression on $k^{-1}\widetilde{S}$, and maps the data point back to their original domain. That is, computes  $\widetilde{D}=T^{-1}( \mbox{IR} (k^{-1} \widetilde{S}))$,  where $\mbox{IR}(\cdot)$ denotes isotonic regression. Let us call $\widetilde{D}$ the reconstructed data.\\
\end{enumerate}

\subsection*{Remark.}
\begin{enumerate}

\item The size of the dataset $n$ is not considered to be a secret and can be derived from the published $\widetilde{S}$.
The transformation $T$ and the lookup table for $k$  are public knowledge prior to the publishing.

\item
\label{remark:2}
When the database size $n$ is unknown to the user,  the publisher can  exploit the sequential composition property of differential privacy  and carry out the following steps:  (1)  Firstly,  publishes a noisy size $\widetilde{n}$ using a  portion of the privacy ``budget''.  (2) Next,   extracts exactly  $\widetilde{n}$ points from the dataset using  a deterministic padding algorithm as follow: if $\widetilde{n}>n$, inserts  $(\widetilde{n}-n)$ number of $0$'s  to the dataset;  if $\widetilde{n}<n$, removes $(n-\widetilde{n})$ smallest  elements.   (3) Lastly,  publishes the  padded pointset  using our proposed mechanism.

\item To relieve the public users from computing step B1, the regression can be carried out by the publisher on behalf of the users. Nevertheless, the raw data $\widetilde{S}$ should be (but not necessary) published alongside the reconstructed data.

\item The public is not confined to adopt a particular isotonic regression. After $\widetilde{S}$ is published, various inference techniques can be applied. For instance, a user may perform a  variant of  isotonic regression that optimizes  objective functions with a smoothness component\cite{wang2008,meyer2008}.

\item
The publisher's main design decisions are the choice of $T$ and the group size $k$. The choice of $T$ depends on the underlying metric of the points. For Euclidean distance in two-dimensional space, the classic Hilbert curve already attains good performance. The group size $k$ can be computed from the lookup table constructed using our proposed noise model.
\end{enumerate}

\section{Security Analysis: Sensitivity of Sorting is bounded}
\label{sec:security}
In this section, we show that the proposed mechanism (Step A1 to A3)  achieves differential privacy, and thus also the reconstructed pointset output by B1. The following theorem shows that sorting, as a function, interestingly has sensitivity 1. Note that a straightforward analysis that treats each element independently could lead to a bound of $n$, which is too large to be useful.

\begin{theorem}
\label{thm:sensofsort}
Let $S_n(D)$ be a function that on input $D$, which is  a multiset containing $n$ real numbers from the unit interval $[0,1]$, outputs the sorted sequence of elements in $D$. The sensitivity of $S_n$ w.r.t. the replacement neighborhood is 1.
\end{theorem}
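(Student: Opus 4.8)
The plan is to compare the two sorted output vectors coordinatewise and then collapse the comparison with a telescoping sum. Let $D_1,D_2$ be neighbors under the replacement neighborhood, so they share a common sub-multiset $C$ of $n-1$ elements with $D_1 = C\cup\{x\}$ and $D_2 = C\cup\{y\}$ for some $x,y\in[0,1]$; by symmetry of the $L_1$ norm we may assume $x\le y$. Write $a=S_n(D_1)$ and $b=S_n(D_2)$.

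The crux is the claim that $a_i\le b_i$ for every $i\in\{1,\dots,n\}$ --- that is, replacing an element of a multiset by a larger one cannot decrease any order statistic. To make this precise I would sort $C$ as $c_1\le\cdots\le c_{n-1}$ and fix ranks $p,q$ at which $x$ and $y$ respectively are inserted, chosen consistently so that $a_i=c_i$ for $i<p$, $a_p=x$, $a_i=c_{i-1}$ for $i>p$, and analogously for $b$ with $q$ in place of $p$; since $x\le y$ one can take $p\le q$. A short case analysis on the position of $i$ relative to $p$ and $q$ (the regimes $i<p$, $p\le i\le q$, and $i>q$), using $x\le c_j$ for $j\ge p$ and $y\ge c_j$ for $j\le q-1$, then gives $a_i\le b_i$ in every case. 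Granting this, the distance telescopes:
\[
\|S_n(D_1)-S_n(D_2)\|_1=\sum_{i=1}^{n}(b_i-a_i)=\Big(\sum_{i=1}^{n}b_i\Big)-\Big(\sum_{i=1}^{n}a_i\Big)=y-x\le 1 ,
\]
where the middle step uses $\sum_i b_i=\sum_{c\in C}c+y$ and $\sum_i a_i=\sum_{c\in C}c+x$, and the last uses $x,y\in[0,1]$; hence $\Delta(S_n)\le 1$. For the matching lower bound, take $D_1$ to be $n$ copies of $0$ and $D_2$ to be $n-1$ copies of $0$ together with one $1$: these are replacement-neighbors whose sorted sequences differ by exactly $1$ in $L_1$ norm, so $\Delta(S_n)=1$.

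The one place requiring care --- and the main obstacle --- is the coordinatewise monotonicity $a_i\le b_i$, specifically defining the insertion ranks $p,q$ consistently in the presence of repeated values; once a tie-breaking convention is fixed the case analysis is routine. If this bookkeeping feels delicate, I would instead use the representation $a_i=\int_0^1 \mathbf{1}\big[\,|\{d\in D_1: d\le t\}|\le i-1\,\big]\,dt$: since the counting functions of $D_1$ and $D_2$ differ by at most $1$ at every $t$, for each fixed $t$ at most one index $i$ can have its indicator flipped, so summing over $i$ and integrating bounds $\sum_i|a_i-b_i|$ by $\int_0^1 \mathbf{1}[\text{the counts differ}]\,dt=|x-y|\le 1$, which is again tight.
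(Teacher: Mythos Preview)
Your proposal is correct and follows essentially the same route as the paper. Both arguments hinge on the observation that replacing an element by a larger one does not decrease any order statistic, after which the $L_1$ distance collapses to the single difference $y-x\le 1$; the paper writes out the reordered sequence of $D_2$ explicitly and telescopes position-by-position, while you package the same computation as the identity $\sum_i b_i-\sum_i a_i=y-x$, which is arguably cleaner since it avoids tracking the exact insertion indices. Your alternative integral representation (counting, for each threshold $t$, how many indicator flips occur) is a genuinely different and self-contained route not present in the paper; it bypasses the tie-breaking bookkeeping you flagged and directly yields the bound $|x-y|$.
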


\begin{proof}
Let Let $D_1$ and $D_2$ be any two neighbouring datasets. $\langle x_1,
x_2 \ldots x_i \ldots x_n \rangle$ be  $S_n(D_1)$, i.e. the sorted  sequence of $D_1$. WLOG, let us assume that  an  element $x_i$ is replaced by a larger value $A$ to give $D_2$, for some $1 \leq i \leq n-1$ and $x_i<A$. Let $j$ to be largest index s.t. $x_j < A \leq 1$.  Hence, the sorted sequence of  $D_2$ is:
$$x_1,x_2, \ldots, x_{i-1}, x_{i+1}, \ldots, x_{j}, A, x_{j+1}, \ldots, x_{n}$$
The $L_1$ difference due to the replacement is,
\begin{eqnarray*}
&&\| S_n(D_1) - S_n(D_2)\|_1    \\
&=& |x_{i+1} - x_i| + |x_{i+2} - x_{i+1}| +  \ldots    \\
&& \ \ \ \ \ \  \ \ \ \ \ \ \ \ \ \ \ \ \ \ \ + |x_{j} - x_{j-1}| + | A - x_{j}|  \\
& = &  (x_{i+1} - x_i) + (x_{i+2} - x_{i+1}) + \ldots  \\
&& \ \ \ \ \ \  \ \ \ \ \ \ \ \ \ \ \ \ \ \ \  + (x_{j} - x_{j-1}) + (A - x_{j}) \\
& = & A - x_i \leq 1\\
\end{eqnarray*}
We can easily find an instance of $D_1$ and $D_2$ where the difference $A- x_i =1$. Hence, the sensitivity is $1$. \end{proof}

In the proof,  the fact that the sequence is sorted is exploited to obtain the bound.   Since the sensitivity is 1, the mechanism $S_n(D) + Lap(1/\epsilon)^n$ enjoys $\epsilon$-differential privacy.   Also note that the value of  $n$ is fixed. Hence, in the context of data publishing, the  size of  $D$ is not a secret and is made known to the public.

Next, we show that grouping (in Step A2) has no effect on the sensitivity.
\begin{corollary}
\label{col:grouping} Consider a partition $H = \{h_1, h_2 \ldots h_m\}$ of the indices $\{ 1, 2, \ldots, n\}$.
Let $S_H(D)$ be the function that, on input $D$, which is a multiset containing $n$ real numbers from the unit interval $[0,1]$,  outputs a sequence of $m$ numbers:
$$y_i = \sum_{j \in h_i}    x_{j},$$
for $1\leq i\leq m$   where $\langle x_1, x_2, \ldots, x_n \rangle$ is the sorted sequence of $D$. The sensitivity of $S_H$ w.r.t. the replacement neigbourhood is 1.
\end{corollary}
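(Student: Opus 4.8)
The plan is to reuse the structure uncovered in the proof of Theorem~\ref{thm:sensofsort}. The key point there was not merely that $\|S_n(D_1)-S_n(D_2)\|_1 \le 1$, but that in passing from $D_1$ to a neighbour $D_2$ the sorted sequence changes coordinate-by-coordinate \emph{monotonically}: every coordinate moves in the same direction. Since grouping replaces each block of coordinates by their sum, and summing quantities of a common sign does not inflate the $\ell_1$ norm (the triangle inequality is tight in that case), the bound of $1$ carries over verbatim.

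Concretely, let $D_1, D_2$ be neighbours, with sorted sequences $\langle x_1,\ldots,x_n\rangle$ and $\langle z_1,\ldots,z_n\rangle$ respectively. By the symmetry between $D_1$ and $D_2$ we may assume, exactly as in the proof of Theorem~\ref{thm:sensofsort}, that $D_2$ is obtained by replacing $x_i$ with a strictly larger value $A \le 1$; letting $j$ be the largest index with $x_j < A$, we get $z_t = x_t$ for $t<i$ and for $t>j$, $z_t = x_{t+1}$ for $i \le t \le j-1$, and $z_j = A$. The observation to record is that $z_t - x_t \ge 0$ for \emph{every} $t\in\{1,\ldots,n\}$, since each such difference is either $0$, or $x_{t+1}-x_t$, or $A-x_j$.

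Now, writing $y_\ell(D)$ for the $\ell$-th coordinate of $S_H(D)$, we have for each block $h_\ell \in H$ that $y_\ell(D_2) - y_\ell(D_1) = \sum_{t\in h_\ell}(z_t-x_t) \ge 0$, and hence $|y_\ell(D_2)-y_\ell(D_1)| = \sum_{t\in h_\ell}(z_t - x_t)$. Summing over all blocks and using that $H$ is a partition of $\{1,\ldots,n\}$,
\[
\|S_H(D_1)-S_H(D_2)\|_1 \;=\; \sum_{\ell=1}^m \sum_{t\in h_\ell}(z_t-x_t) \;=\; \sum_{t=1}^n (z_t-x_t) \;=\; A - x_i \;\le\; 1,
\]
the penultimate equality being the same telescoping sum as in the proof of Theorem~\ref{thm:sensofsort}, and the last inequality holding because $x_i, A \in [0,1]$. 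For the matching lower bound, take $D_1 = \{0,1,1,\ldots,1\}$ and $D_2=\{1,1,\ldots,1\}$: regardless of $H$, the two sorted sequences differ only in the first coordinate and by exactly $1$, so $\|S_H(D_1)-S_H(D_2)\|_1 = 1$. Therefore $\Delta(S_H)=1$.

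The corollary is essentially a one-line consequence of Theorem~\ref{thm:sensofsort} once the sign observation is in place, so I do not anticipate a real obstacle. The only points that need care are: (i) spelling out why the coordinate-wise displacements all share a sign, which is precisely what keeps the triangle inequality from losing a factor when coordinates are summed within a block; and (ii) noting that the grouping in Step~A2 is indexed by \emph{position} in the sorted sequence, so it pairs the $t$-th coordinate of $S_n(D_1)$ with the $t$-th coordinate of $S_n(D_2)$ — if the grouping were by value, the alignment and hence the argument would break down.
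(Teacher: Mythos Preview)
Your proof is correct and follows essentially the same approach as the paper: assume WLOG that the replacement is by a larger value, observe that every coordinate of the sorted sequence moves in the same (nonnegative) direction, so the absolute values on the block sums drop and the total telescopes to at most $1$. Your write-up is in fact a bit more explicit than the paper's (you spell out the form of $z_t$ and supply the matching lower-bound instance, which the paper omits in this corollary).
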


\begin{proof}
Let us consider two neighbouring datasets $D_1$ and $D_2$, and their respective sorted sequences be
$$x_1, x_2, \ldots, x_n$$ and
$$x'_1,x'_2, \ldots, x'_{n}$$
  WLOG, let us assume that $D_2$ is obtained by replacing an element in $D_1$ with a strictly larger element. Thus, $x_i \leq x'_i$ for all $i$'s.

The $L_1$ difference due to the replacement is
\begin{eqnarray*}
& & \|S_{H}(D_1) - S_{H}(D_2) \|_1 \\
 & = & \sum_{i=1}^m \left|  \sum_{j \in h_i} x'_i  -  \sum_{j \in h_i} x_i \right| \\
 & = & \sum_{i=1}^m \left(\sum_{j \in h_i} {x}'_i -  \sum_{j \in h_i} x_i \right) \\
& = &  \sum_{j=1}^n  \left( x'_j - x_j \right)  \leq 1\\
\end{eqnarray*}
\end{proof}

Note that the grouping in step A2  is a special partition with equal-sized $h_i$'s . Hence, Corollary \ref{col:grouping} gives a more general result where $H$ can be any partition.  From Corollary \ref{col:grouping}, the proposed mechanism that publishes  $\widetilde{S}$  achieves  $\epsilon$-differential privacy.

\section{Analysis and Parameter Determination}
\label{sec:analysis}
The main goal of this section is to analyze the effect of the privacy requirement $\epsilon$, dataset size $n$ and the group size $k$ on inducing the error in the reconstructed data, which in turn provides a strategy in choosing the parameter $k$ from the given $n$ and $\epsilon$.

Intuitively, in the absent of ``generalization noise'', when $n$ is larger, there are more constraints 
in the isotonic regression, leading to a more accurate reconstruction.
Grouping affects the accuracy in two opposing ways. It reduces the number of constraints for regression, and introduces  \emph{generalization error}.  On the other hand, the Laplace noise is essentially reduced by a factor of $k$.
By taking into account of the above factor, and the effect of generalization noise, we can determine the optimal $k$.

\subsection{Error function and Utility}
We use an error function related to the Earth-Mover-Distance (EMD)~\cite{rubner1997earth} to quality the utility of the published data.   The EMD between two pointsets of equal size is defined to be the minimum cost of bipartite matching between the two sets, where the cost of an edge linking  two points is the cost of moving one point to the other.      Hence, EMD can be viewed as  the  minimum cost of transforming one pointset to  the other.   Different variants of EMD differ on how the cost is defined.  In this paper, we adopt the typical definition that defines the cost as the Euclidean distant between the two points.

In one-dimensional space, the EMD between two sets $D$ and $\widetilde{D}$ is simply  the $L_1$ norm of the differences between the two  respective sorted sequences, i.e. $\| S_n(D) - S_n(\widetilde{D}) \|_1$, which  can be efficiently computed.  In other words,
\begin{eqnarray}
\label{eqn:pointdistance}
 \mbox{EMD} (D, \widetilde{D})= \sum_{i= 1}^{n} |p_i - \widetilde{p}_i|
\end{eqnarray}
where $p_i$'s and $\widetilde{p}_i$'s are the sorted sequence of $D$ and $\widetilde{D}$ respectively.

Given a pointset $D$ and the published pointset $\widetilde{D}$ of a mechanism ${\cal M}$ where $|D|=|\widetilde{D}|=n$, let us define the {\em normalized error} as  $\frac{1}{n}\mbox{EMD} (D, \widetilde{D})$ and denote  $\Err_{\cal M, D}$ the expected  normalized error,
\begin{eqnarray}
{{\Err}}_{{\cal M},D}
 = Exp \left[ \  \frac{1}{n}  \  \mbox{EMD}( D, \widetilde{ D} )  \  \right]
 \end{eqnarray}  where the expectation is taken over the randomness in the mechanism.

Although EMD can be computed efficiently for one-dimensional pointsets,  the best known algorithm that computes  EMD in higher dimension  has cubic running time \cite{lawler2001combinatorial}. Jang et al.~\cite{jang2011linear} proposed a fast approximation that employs a space-filling curve. Similarly, for higher dimensional space,  we approximate the  EMD  by first  map each point to a real number in $[0,1]$ through a space-filling curve, and then compute the EMD in the one-dimensional space in $O(n \log n )$ time.

\subsection{Error incurred from Laplace Noise}
Let us first omit the effect of grouping and consider cases where $k=1$.  We conduct experimental studies on four types of pointsets with varying size $n$:  (1) Multisets containing elements with the same value 0.5 (herein called ``repeating single-value dataset), (2)  sets  containing equally-spaced numbers $(i/ (n-1))$ for $i=0, \ldots, n-1$ (herein call equally-spaced dataset), (3)  sets containing  $n$ randomly chosen elements  from the Twitter location data~\cite{website}, and (4) sets containing $n$ randomly chosen elements from the  Kalu{\v{z}}a's data~\cite{kalu2010agent} .

Fig. \ref{fig:errorbar} shows the expected normalized error.  Each value on the graph is the average over 500 sample runs.
Not  surprisingly, the expected error reduces when  the number of points increases. Fig. \ref{fig:epsilongeffect} shows the expected normalized error for dataset on equally-spaced points for different $\epsilon$.  The results agree with the intuition that when $\epsilon$ is increased by a factor of $c$, the error would approximately decrease by factor of $c$ as shown in Fig. \ref{fig:Ratio}.

\subsection{Effect of Grouping on Laplace Noise}

\begin{figure}[!h] \centering
\includegraphics[width=0.53\textwidth]{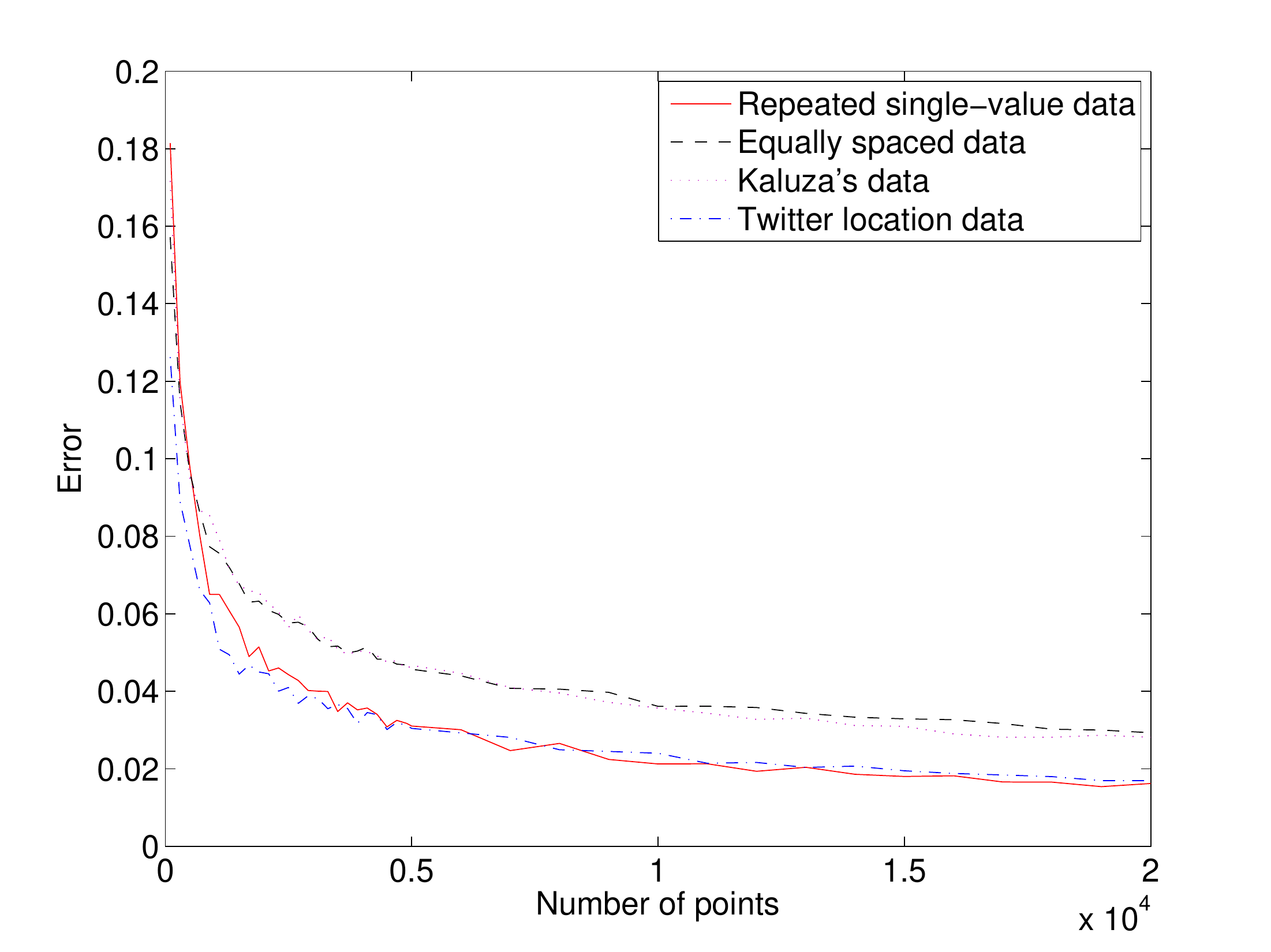}
\caption{The expected normalized error without grouping versus the size of the dataset. The red solid line is for repeating single-value dataset, the black dashed line is for equally-spaced numbers, the purple dotted line is for the  Kalu{\v{z}}a's dataset and the blue dash-dot line is for the Twitter location dataset.}
\label{fig:errorbar}
\end{figure}

\begin{figure}[!h] \centering
\includegraphics[width=0.53\textwidth]{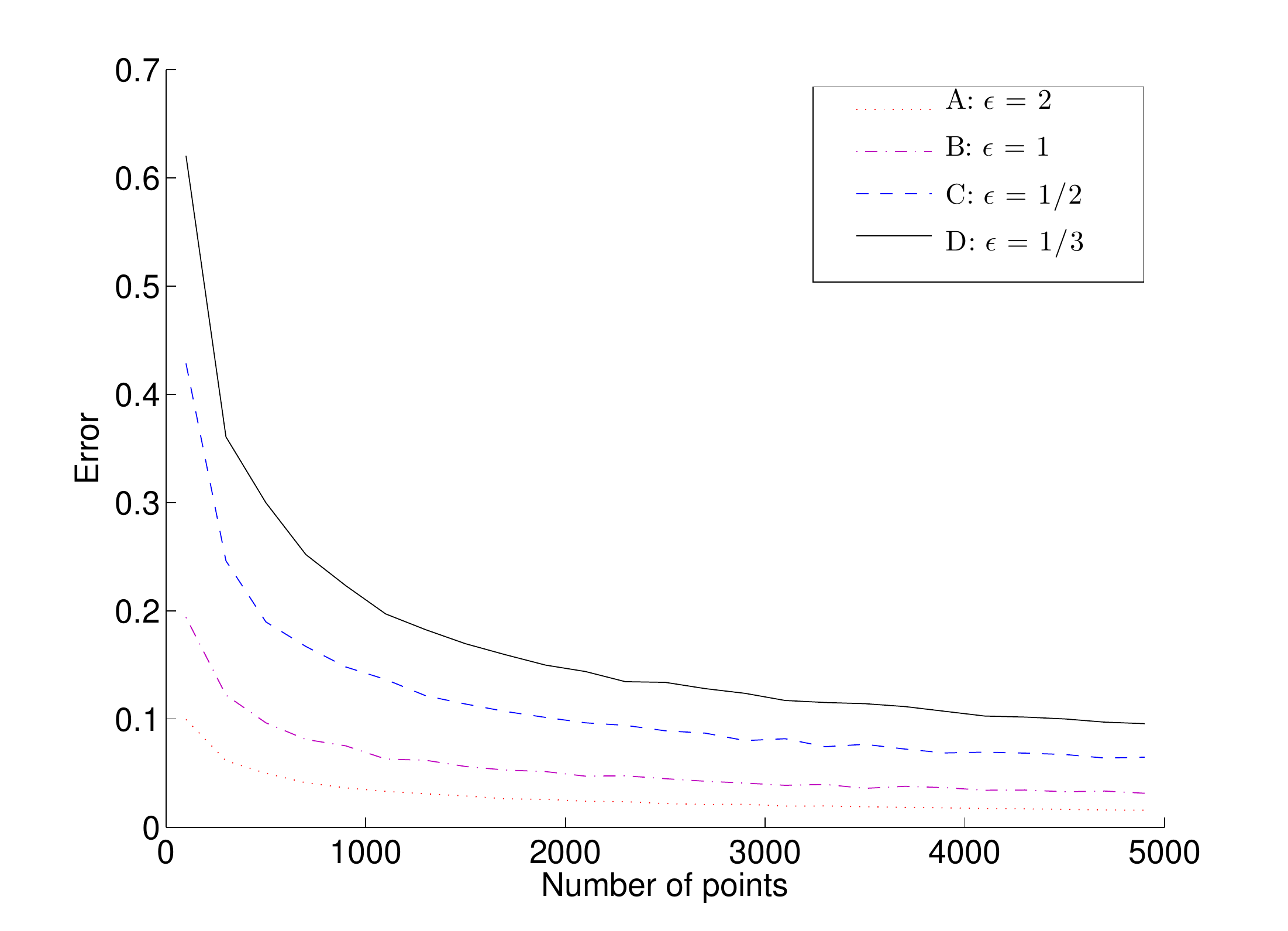}
\caption{The expected normalized error without grouping versus the size of dataset for different the security parameter $\epsilon$.}
\label{fig:epsilongeffect}
\end{figure}

\begin{figure}[!h] \centering
\includegraphics[width=0.53\textwidth]{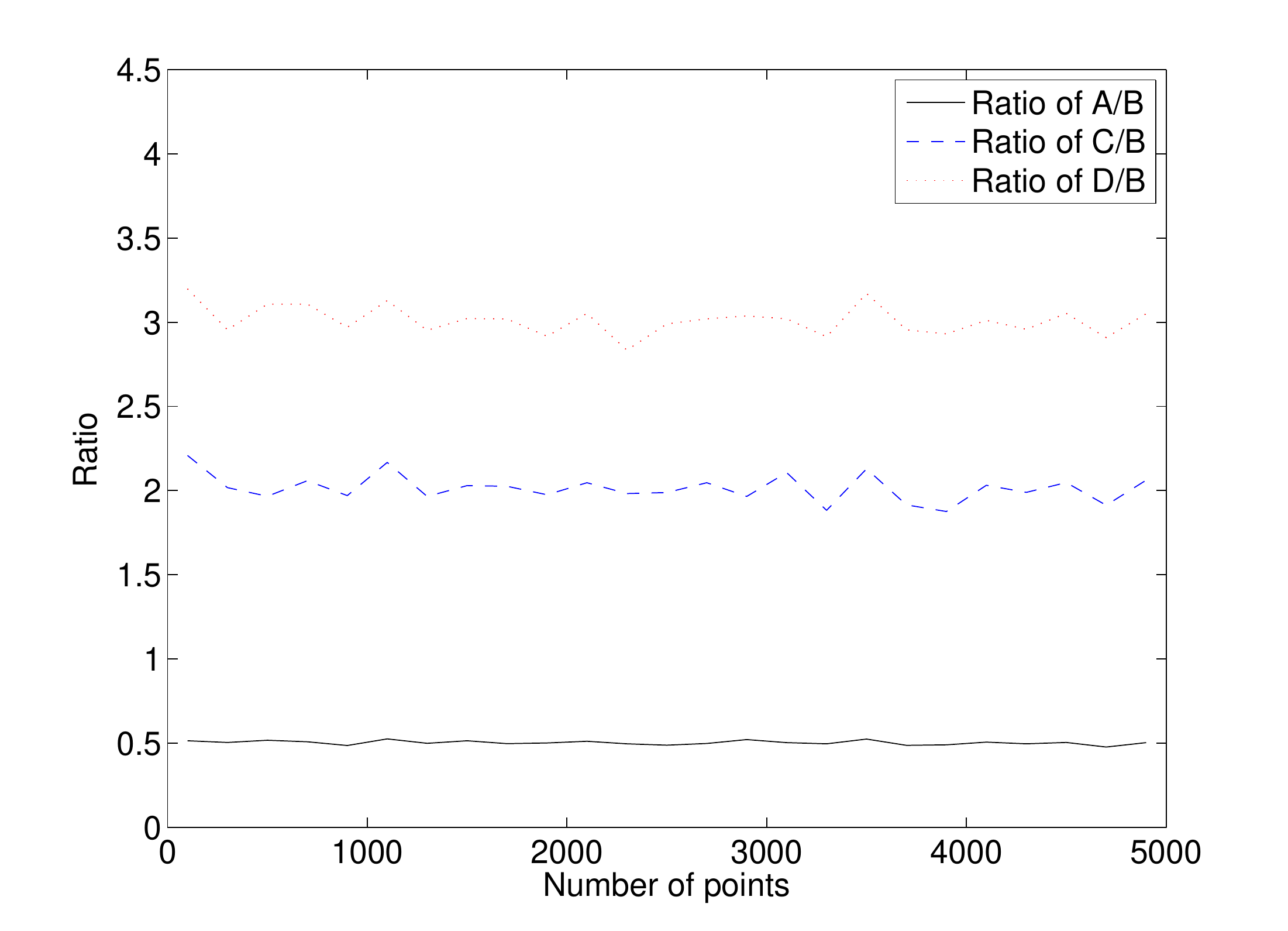}
\caption{The ratio of the expected normalized error with different $\epsilon$ against the expected normalized error with $\epsilon = 1$.}
\label{fig:Ratio}
\end{figure}

Now, we consider cases where $k>1$.  Grouping reduces the number of constrains by a factor of $k$. As suggested by Fig. \ref{fig:errorbar}, when the number of datapoint decreases,  error increases.

On the other hand, recall that the regression is performed on the published values divided by $k$ (see the role of $k$ in Step B1). This essentially reduces the level of Laplace noise by a factor of  $k$.   Hence, the accuracy attained by grouping $k$  elements is ``equivalent'' to the accuracy attained without grouping but with the privacy parameter $\epsilon$ increased by a factor of $k$.

 From Fig. \ref{fig:errorbar}, we can predict the effects of grouping on the repeating single-value dataset. For instance, if $n=10,000$ and $\epsilon=1$ and $k=5$, without grouping, the reconstructed points are expected to have a $0.02$ error; whereas with grouping of size $5$, the expected error is $0.05/5 = 0.01$.  Fig. \ref{fig:predictednoise} shows the predicted errors under different $k$'s, for $n=10,000$ and $\epsilon=1$ of different datasets.

\begin{figure}[!h] \centering
\includegraphics[width=0.53\textwidth]{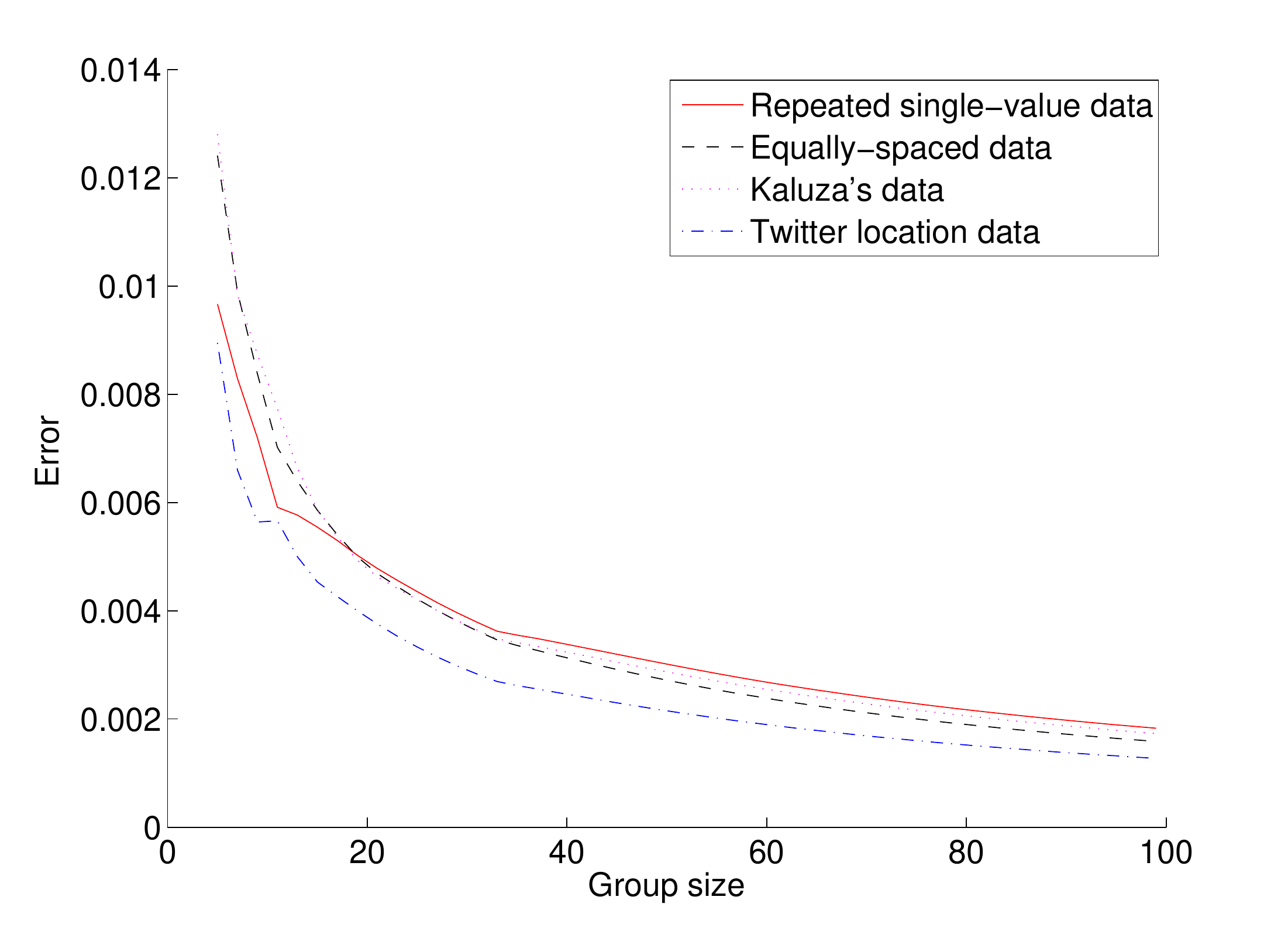}
\caption{Predicted error versus different group size without generalization noise for different datasets of size $n=10,000$ and $\epsilon=1$. The red solid line is the upper bound, the black dashed line, purple dotted line and blue dash-dot line are for equally-spaced numbers dataset, Kalu{\v{z}}a's dataset and Twitter location dataset respectively.}
\label{fig:predictednoise}
\end{figure}

\subsection {Effects of Grouping on Generalization Noise}

 The negative effect of grouping is the {\em generalization noise}, as  all elements in a group is represented by their mean.  Before giving formal description of generalization noise, let us introduce some notations.

Given a  sequence $D=\langle x_1, \ldots, x_n \rangle$ of $n$ numbers, and a parameter $k$, where $k$ divides $n$,
let us call the following function {\em downsampling}:
$$  \down_k ( D) = \langle s_1, \ldots, s_{(n/k)} \rangle$$
where each $s_i$ is the average of $x_{k(i-1)+1}, \ldots, x_{ik}$.  Given a sequence
$D'=\langle s'_1, \ldots, s'_m\rangle$ and $k$, let us call the following function  {\em upsampling},
$$ \up_k (D') = \langle x'_1, \ldots, x_{mk} \rangle$$
where $x'_i=   s'_{ \lfloor (i-1)/k \rfloor +1}$ for each $i$.

The {\em normalized generalization error} is defined as,
$$
\Gen_{D,k} = \frac{1}{n}  \| D -   \up_k ( \down_k ( D))  \|_1
 $$

It is easy to see that, for any $k$ and  $D$, the normalized generalization error is at most  $k /(2n)$.
Fig. \ref{fig:generror} shows the generalization error of different group size a dataset containing $10,000$ equally-spaced values, a dataset containing $10,000$ numbers randomly drawn from the transformed Kalu{\v{z}}a's dataset and a dataset of $10,000$ numbers randomly drawn from the transformed Twitter location data.  They agrees with our upper bound on the generalization error.

Furthermore, the worst case occurred when the values in the groups divided equally between two values, for example, half of them have value $0$, and half of them have value $1$. This is very unlikely. Intuitively, even if the elements in the groups only have two distinct value $a$ and $b$, the number of elements having value $a$ may vary. Hence, one would expect the average generalization error to be $k /(4n)$. Fig. \ref{fig:generror} shows that such approximation is very accurate and consistent for various datasets.

\begin{figure}[!h] \centering
\includegraphics[width=0.53\textwidth]{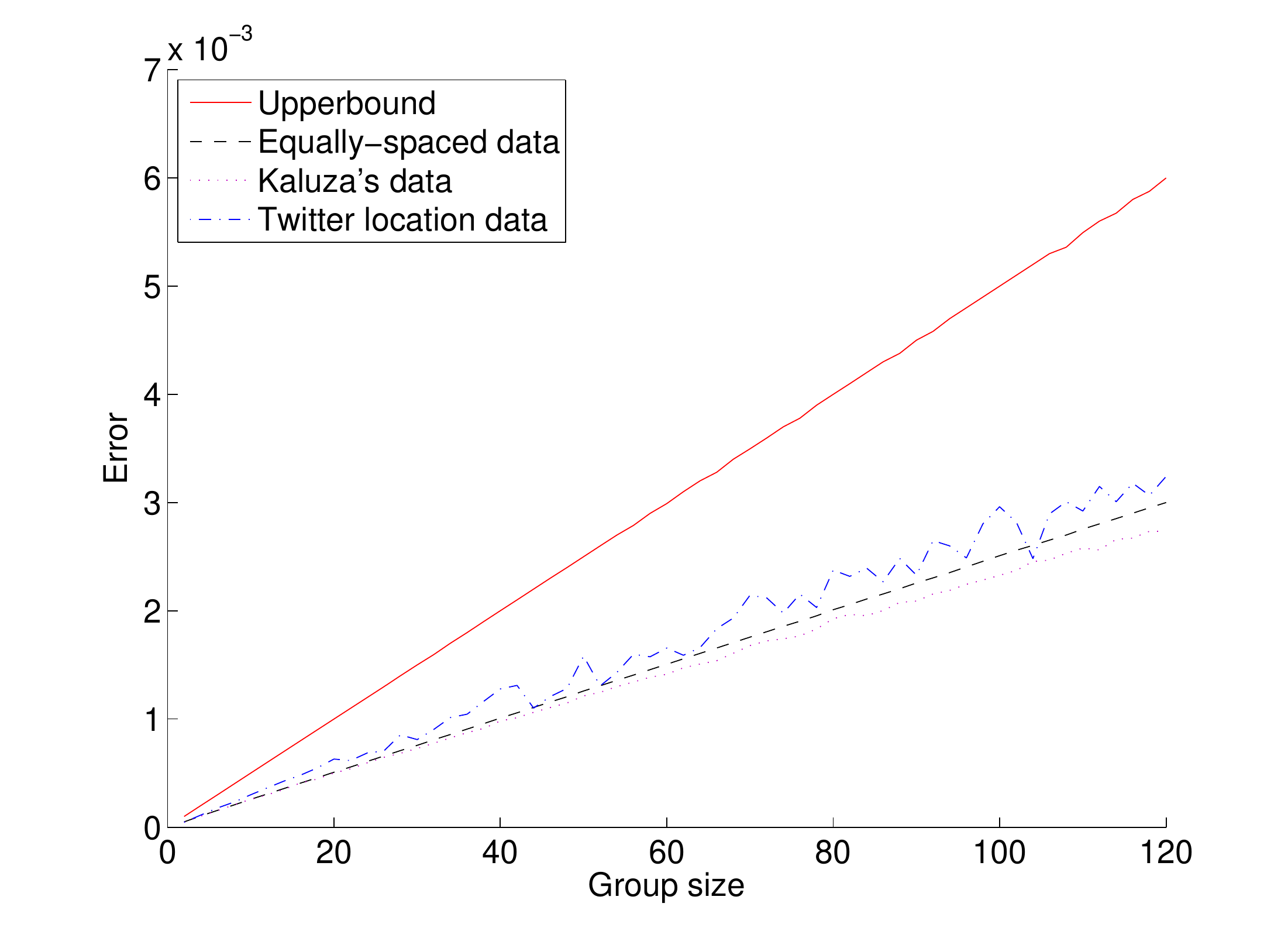}
\caption{Generalization error of different group size for different datasets of size $n=10,000$ and $\epsilon=1$.  The red solid line is for the Kalu{\v{z}}a's dataset,  the blue dotted line is for the equally-spaced dataset and the black dashed line for the  Twitter location dataset.}
\label{fig:generror}
\end{figure}

\subsection{Combined effects of grouping}

Now, let us combine the effects of both grouping and Laplace noises on the normalized error $\Err_D$.
Let us consider the mechanism that, on input $D$ and the parameter $k$, outputs
$$  {\cal M}_k (D)=  \up_k( \mbox{IR} (  \down_k( S_n(D)   ) + \mbox{Lap} (1)^{n/k} ))
 $$
This mechanism is essentially similar to our proposed method, but with the difference on how $k$ is chosen: here, the $k$ is given as a parameter, whereas in  Step A1 of the proposed method, the $k$ is chosen from a lookup table.  Recall that the expected normalized error produced by this mechanism ${\cal M}_k$  on $D$ is denoted  as
$\Err_{{\cal M}_k, D}$. For abbreviation, we write it as
$$\Err_{k, D}
$$

Let $\widetilde{S}$ to be an instance of $  \down_k( S_n(D)   ) + \mbox{Lap} (1)^{n/k}   $, and $\widetilde{D}$ the corresponding reconstructed dataset generated by ${\cal M}_k$, i.e. $\widetilde{D} = \up_k (  \mbox{IR} ( \widetilde{S}) )$. We have,
 \begin{eqnarray}
\lefteqn{\mbox{ EMD }(D, \widetilde{D} )
=  \|  S_n(D) -    \up_k (  \mbox{IR} ( \widetilde{S}) ) \|_1 }\nonumber \\
&=& \|  S_n(D)-\up_k (\down_k(S_n(D))  \nonumber \\
&&   \ \ \ \ \ \ \ \ \ \   + \up_k(\down(S_n(D)))    -  \up_k (  \mbox{IR} ( \widetilde{S} ))  \|_1\nonumber \\
&\leq&
 n \cdot  \Gen_{D,k} + \| \up_k ( \down_k ( S_n(D))) -\up_k( \mbox{IR} ( \widetilde{S})) \|_1 \nonumber \\
&=& n \cdot \Gen_{D,k} + k  \cdot \| \down_k ( S_n(D)) - \mbox{IR} ( \widetilde{S}) \|_1 \nonumber \\
&=& n \cdot \Gen_{D,k}+ k \cdot \mbox{EMD} ( \down_k(S_n(D)),   \mbox{IR} ( \widetilde{S}))
  \end{eqnarray}
Note that the first term $n\cdot  \Gen_{D,k}$ is a constant independent of the random choices made by the mechanism.  Also note that the second term is the EMD between the down-sampled dataset and its reconstructed copy obtained using  group size~1.  By taking expectation over randomness of the  mechanism (i.e. the Laplace noise  $\mbox{Lap} (1)^{n/k}$),  we have
\begin{eqnarray}
\label{eq:main_inequality}
\Err_{k, D} \leq \Gen_{k, D} + \Err_{ 1,\down_k(D)}
\end{eqnarray}
In other words,  the expected  normalized error  is bounded  by the sum of normalized generalization error,  and  the normalized error incurred by the Laplace noise. Figure \ref{fig:bestgroupsize10000} shows the three values versus different group size $k$ for equally-spaced data of size 10,000. The minimum of the expected normalized error suggests the optimal group size $k$.

Fig. \ref{fig:boundonerror} illustrates the  expected errors  for different $k$ on the Twitter location data with 10,000 points.   The red dotted line is $\Err_{k,D}$ whereas the blue solid line is the sum in the right-hand-side of the inequality (\ref{eq:main_inequality}).  Note that the differences between the two graphs are small. We have conducted experiments on other datasets and observed similar small differences. Hence, we  take the sum  as an approximation to the expected normalized error,
\begin{eqnarray}
\label{eq:main_approx}
\Err_{k,D} \approx \Gen_{k,D} + \Err_{1,\down_k(D)}
\end{eqnarray}

\begin{figure}[!h] \centering
\includegraphics[width=0.53\textwidth]{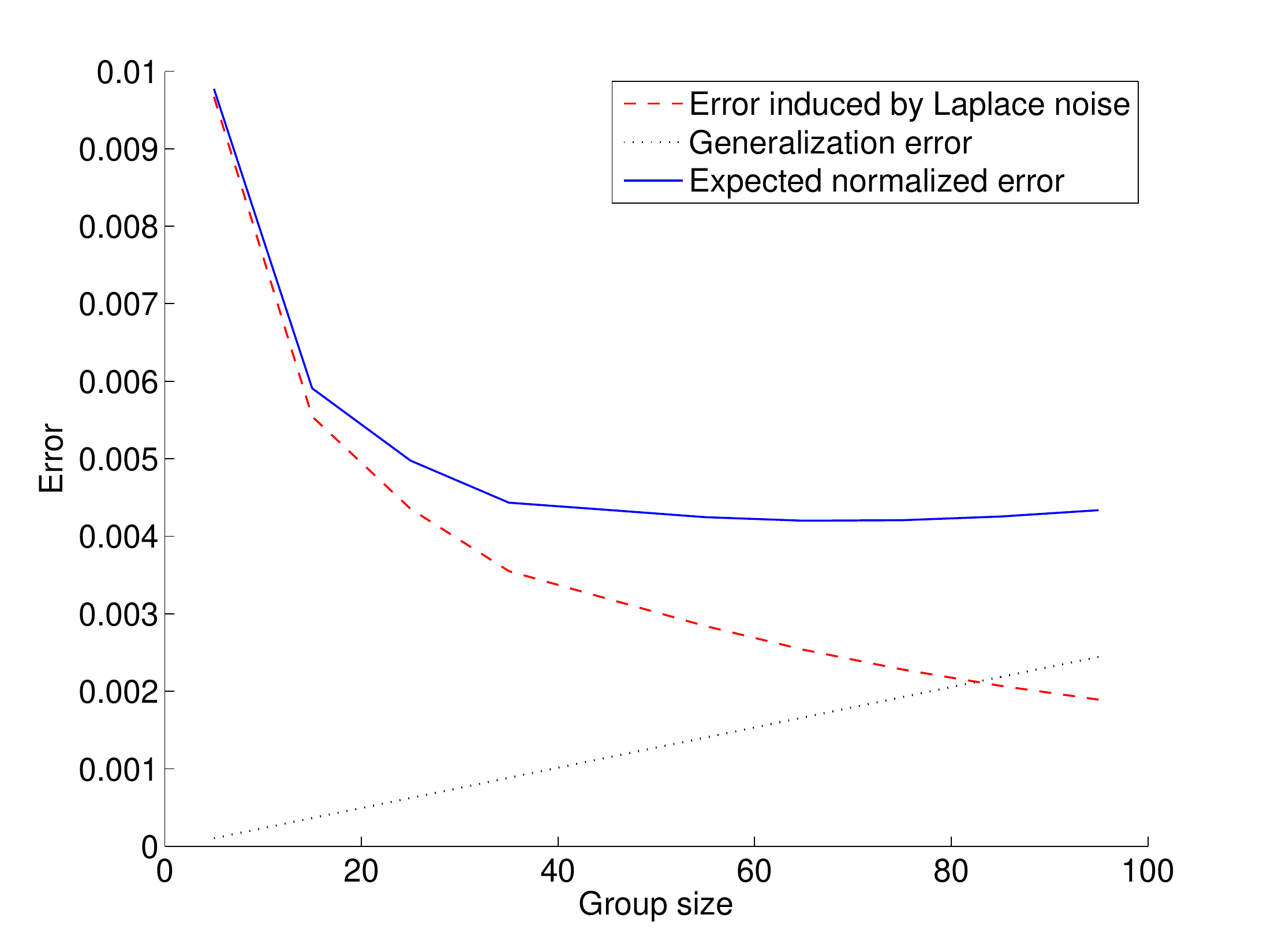}
\caption{The normalized error derived from the generalization error and perturbation error of different group size $k$ for database of size $n=10,000$ and $\epsilon=1$.}
\label{fig:bestgroupsize10000}
\end{figure}

\subsection{Determining the group size $k$}

Now, we are ready to find the optimal $k$ given $\epsilon$ and $n$.
From Fig. \ref{fig:errorbar} and Fig. \ref{fig:generror} and the approximation given in equation (\ref{eq:main_approx}), we can determine the best group size $k$ give the size of the database $n$ and the security requirement $\epsilon$. From $\epsilon$ and Fig. \ref{fig:errorbar}, we can obtain the value $\Err_{1,\down_k(D)}/\epsilon$ for different $k$. From the database's size $n$ and Fig. \ref{fig:generror}, we can approximate $\Gen_{k,D}$ for different $n$. Thus, we can approximate the normalized error $\Err_{k,D}$ with equation (\ref{eq:main_approx}) as illustrated in Fig. \ref{fig:bestgroupsize10000}.
Using the same approach, the best group size given different $n$ and $\epsilon$ can be calculated and is presented in table \ref{tab:bestgroupsize}.

\begin{figure}[!h] \centering
\includegraphics[width=0.53\textwidth]{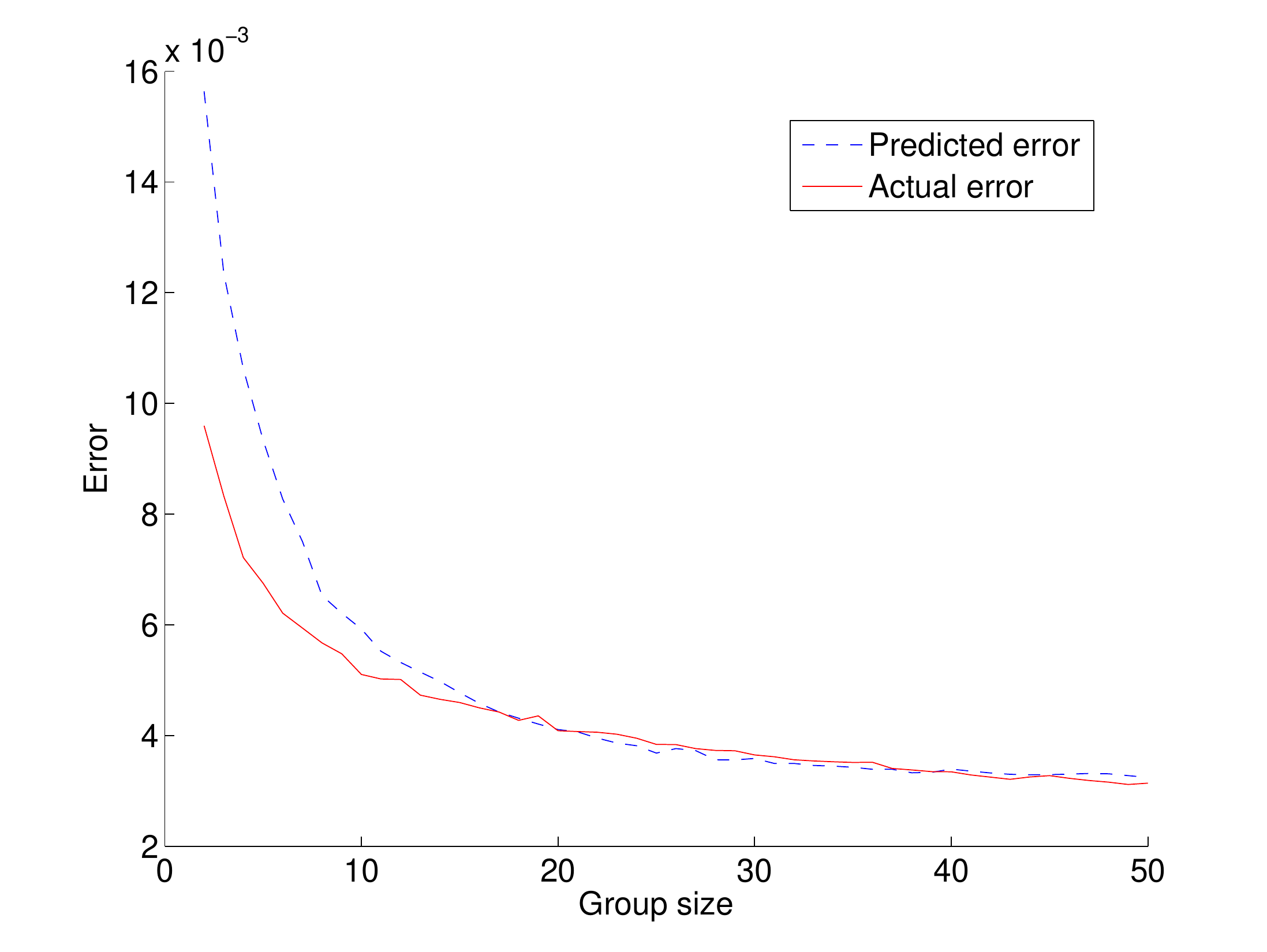}
\caption{The predicted error for 10,000 points with $\epsilon = 1$ and the actual error of dataset contains 10,000 points randomly selected from Twitter location dataset.}
\label{fig:boundonerror}
\end{figure}

\begin{table}[!h]
\centering
\caption{The best group size $k$ given $n$ and $\epsilon$}
\label{tab:bestgroupsize}
\begin{tabular}{|l|c|c|c|c|}
  \hline
   & $\epsilon = 0.5$ & $\epsilon = 1$ & $\epsilon = 2$  & $\epsilon = 3$ \\ \hline
 n= 2,000 & 44 &29 & 20 & 12 \\
 n= 5,000 &59  &37& 27  & 18 \\
 n= 10,000 &79&51 & 36  & 27 \\
 n= 20,000 & 121&83 & 61  & 41 \\
 n= 100,000 &234 & 150 & 98 & 73 \\
  \hline
\end{tabular}
\end{table}

\subsection{Effect of Isotonic Regression}

Isotonic regression is not unbiased.  Reconstructed points on the left side (i.e. having value smaller than the median) tend to have negative bias,  whereas points on the right side (i.e. having value larger than the median) tend to have positive bias. The biasness usually is smaller for points nearer to the median or the two ends.

We conducted experiments on an equally-spaced data of size 1000 to measure the displacement (the difference between the reconstructed points and the original point).  Fig. \ref{fig:bias} shows the estimated distribution of the displacement of a smaller point (the 100th point) and a larger point (the 900th point) derived from 200,000 runs of the experiments. The displacement of the smaller point has a mean of $-0.0163$ and the larger point has a mean of $0.0162$, both with a variance of $0.0138$.


\begin{figure}[!h]
\includegraphics[width=0.5\textwidth]{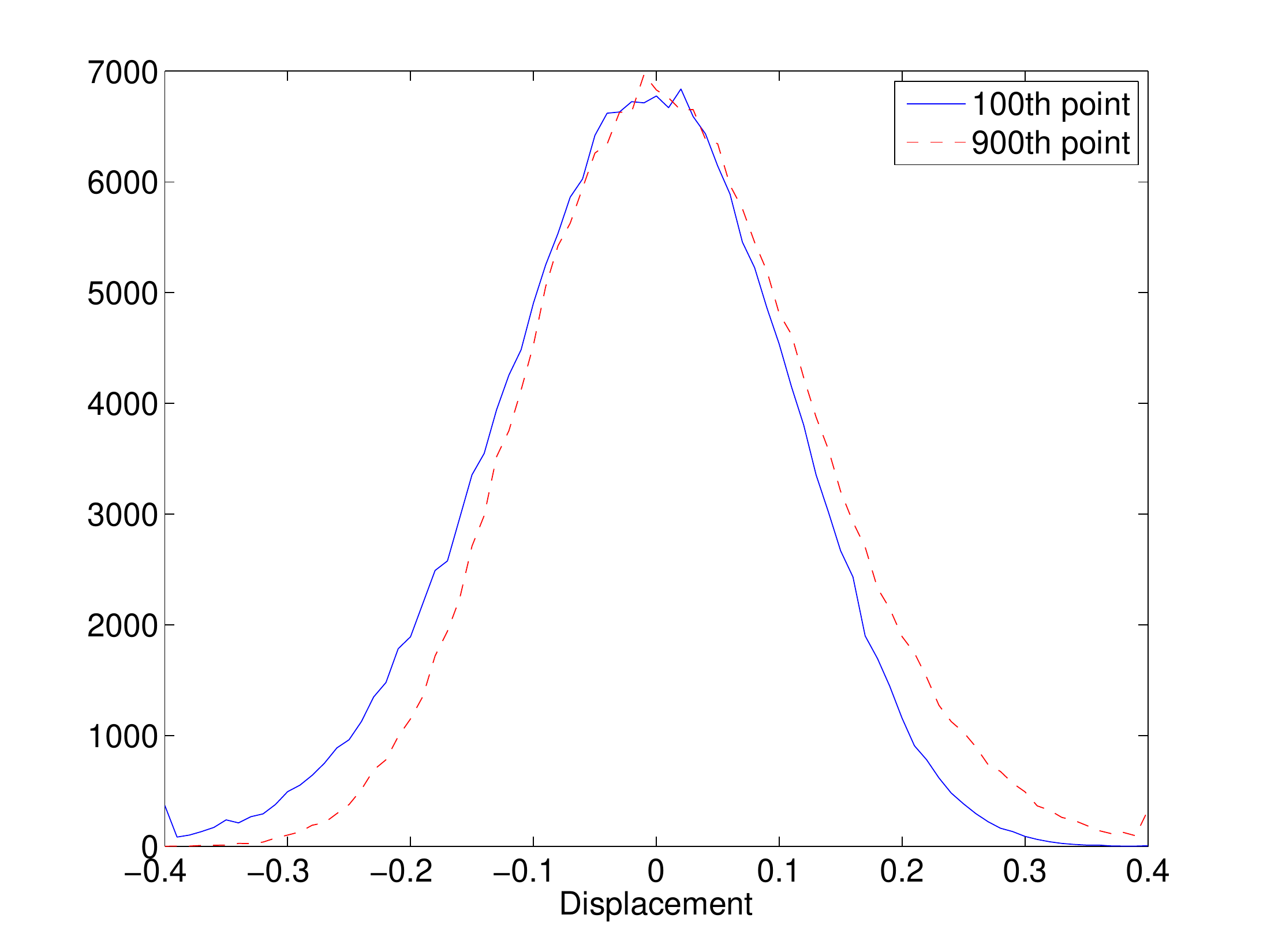}
\caption{The displacement distribution of the 100th point and the 900th point in the reconstructed sequence of our process on an equally-spaced pointset of size 1000.}
\label{fig:bias}
\end{figure}

\section{Comparisons}
\label{sec:experiments}
In this section, we compare the performance of  the proposed mechanism with three mechanisms w.r.t. different utility functions. The first mechanism outputs  equi-width histograms differential privately. We treat the generated equi-width histogram as an estimate of the underlying probability density function, and use the statistical distance between density functions as a measure of utility.  Next, we investigate the wavelet-based mechanism proposed by Xiao et al.~\cite{xiao2010differential} and measure accuracy of range queries.  Lastly, we consider the problem of estimating median, and compare with a mechanism based on smooth sensitivity proposed by Nissim et al\cite{nissim2007smooth}.  We remark  that although comparisons are based on different utility functions,  our proposed mechanism is the  same, in particular, the parameter $k$ is chosen from the same lookup table.



\subsection{Equi-width Histogram}\label{sec:density_function}
The equi-width histogram compresses of equal-sized non-overlapping bins.
With respect to the replacement neighborhood, the histogram generation has sensitivity of 2 and thus adding the Laplace noise $Lap(2/\epsilon)$ to each of the frequency counts will give a differentially private histogram.  Note that the size of the bins has to be determined prior to publishing. Without good background knowledge of the pointset, it is  difficult to determine a good choice of  bin size, as  the same bin size can lead to significantly different accuracy for different pointsets.   \\

\noindent {\em Estimating the underlying pdf:\ \ }
 A histogram of a pointset can be treated as an estimate of an underlying density function $f$ whereby the points are drawn from.  The value of  ${f}(x)$  can be estimated  by  the  ratio of the number of samples,  over the width of the bin where $x$ belongs to, multiplies by some normalizing constant. Essentially, this estimation employs step function as its kernel in estimating the density function.

In this section, we treat the estimation of the density function as the main usage of the published data. Hence, we qualify the mechanism's utility  by the distance between the two estimated density functions: one that is derived from the original dataset, and the other that is derived from the mechanism's output.


To facilitate comparison, we need an algorithm to estimate the density function from the original pointset $D$.
There are many ways to estimate the density function, and we adopt the following method: Let $B$ be the set of {\em distinct}-points in $D$, and let us consider the Voronoi diagram of $B$. The  cells in the Voronoi diagram are taken to be the bins of a histogram, from which an estimate of the density function is obtained. Note that the bins generated have variable sizes, and thus the above process can be treated as a form of  ``variable-bandwidth'' kernel density estimation\cite{terell1992}, where the kernels are step functions with different shapes.

Similarly we  need to estimate the  density function when given the output from our mechanism.
Since isotonic regression is performed on the space-filling-curve, we adopt a variant of the above estimation with the Voronoi diagram computed in the transformed one-dimensional space.  In other words,
given the reconstructed $\widetilde{D}$ of multidimensional points, let ${B}$ be the set of distinct-values in $T(D)$ where $T$ is the locality preserving map.  Next, determine the Voronoi diagram of $B$, which comprises of a sequence of intervals.  Such sequence of intervals form the bins of a histogram, from which an estimate of the  density function is obtained.\\

\noindent {\em Experimental results:\ \ }
Fig. \ref{fig:densmaporignal}, \ref{fig:densmaphist} and \ref{fig:densmapours} show the estimated density function from the Twitter's location dataset,  the density functions reconstructed by noisy equi-width histogram and the dataset by our mechanism. For comparisons, 1\% of the original points are plotted on top of the two reconstructed density functions.
For the original dataset and the reconstructed dataset by our mechanism, we quantize the location domain into $1024 \times 1024$ units, and compute the estimated density function using the aforementioned method. For the equi-width histogram, each bin is of $25 \times 25$ units, there there are in total $1681$ bins.
Fig. \ref{fig:densmaphistzoom} and \ref{fig:densmapourszoom} show the details of the two reconstructed density functions at the region $[420, 560] \times [720, 840]$.
Observe that in the density function produced by our mechanism has ``variable-sized'' cells and thus is able to adaptively capture the fine details with small amount of noise.

\begin{figure}[!h]
\includegraphics[width=0.5\textwidth]{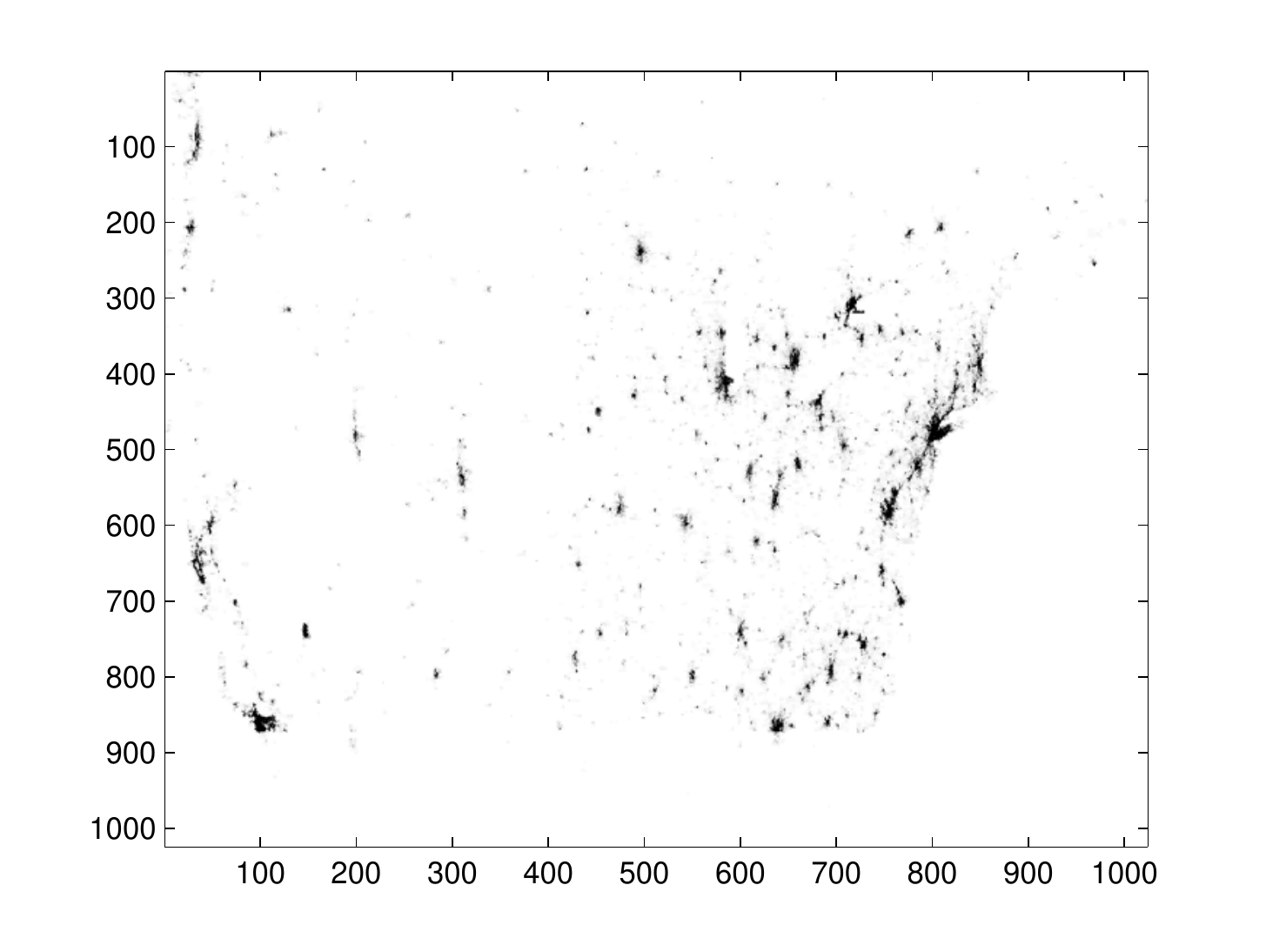}
\caption{Density function estimated from the original dataset, with darker area representing larger value.} \label{fig:densmaporignal}
\end{figure}

\begin{figure}[!h]
\includegraphics[width=0.5\textwidth]{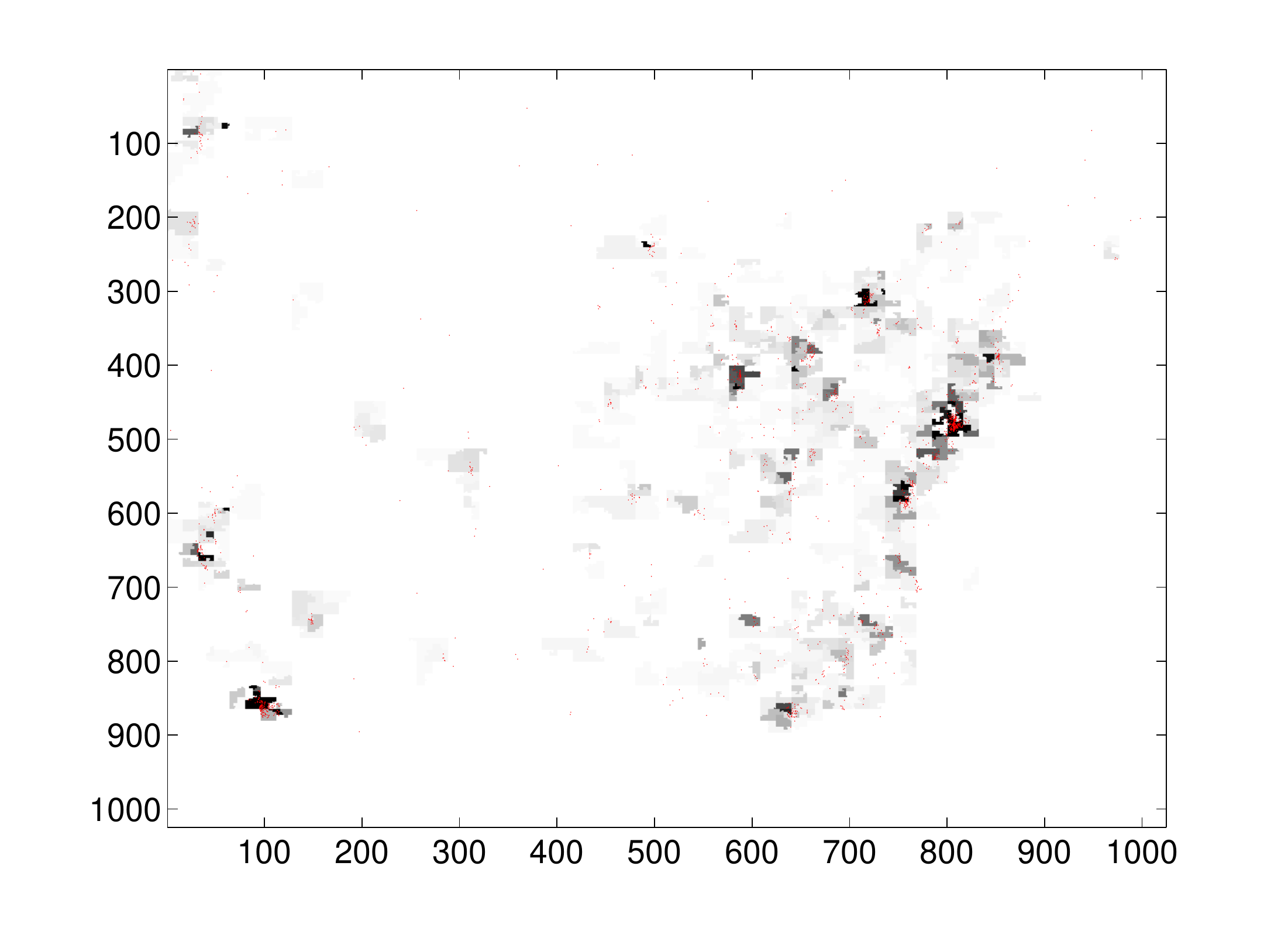}
\caption{Density function of dataset estimated using our mechanism with $\epsilon = 3$, with 1\% of the original points (randomly selected) plotted on top.} \label{fig:densmaphist}
\end{figure}

\begin{figure}[!h]
\includegraphics[width=0.5\textwidth]{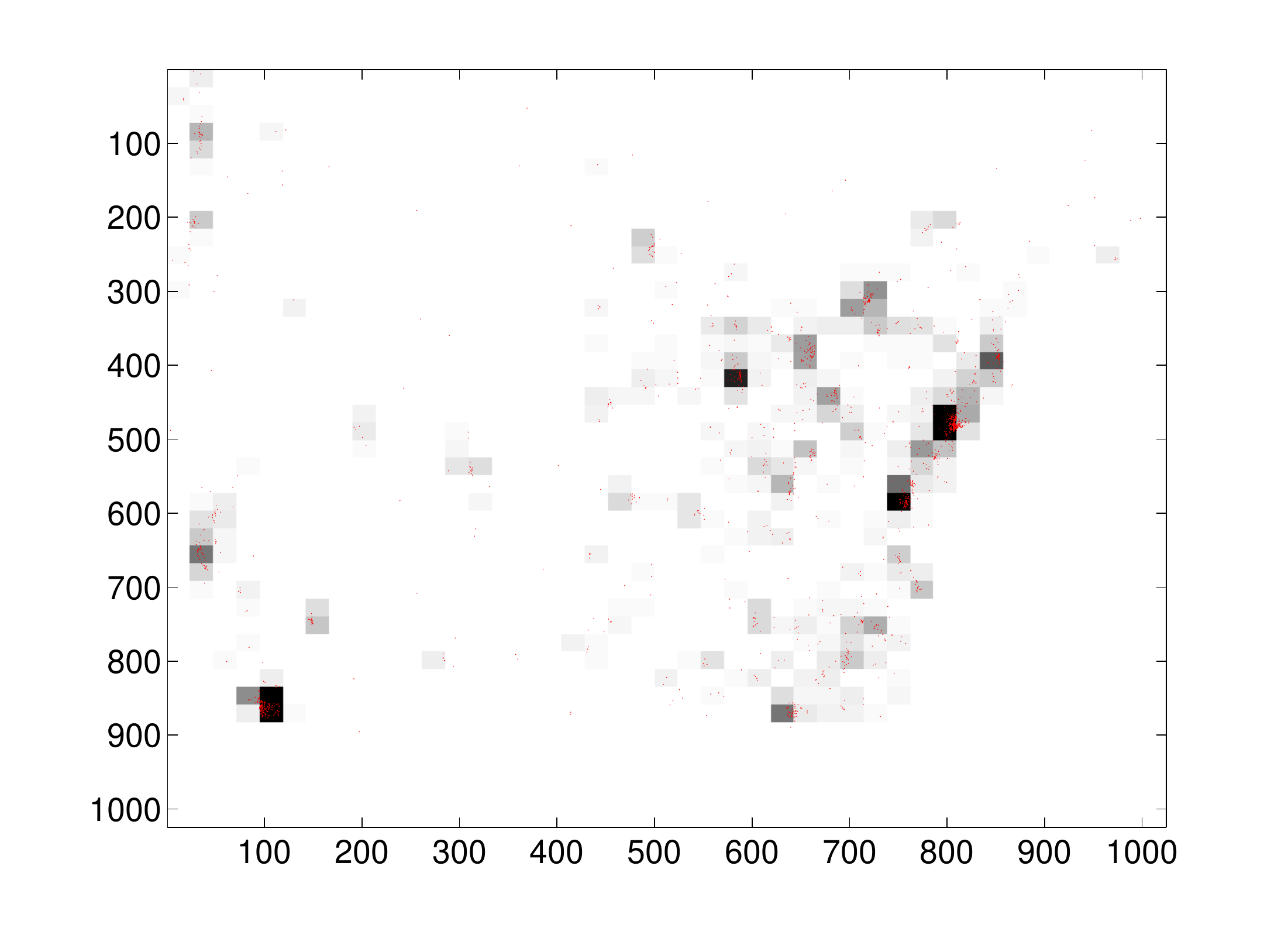}
\caption{Density function estimated using equi-width histogram mechanism with  $\epsilon = 3$, with 1\% of the original points (randomly selected) plotted on top.} \label{fig:densmapours}
\end{figure}

\begin{figure}[!h]
\includegraphics[width=0.5\textwidth]{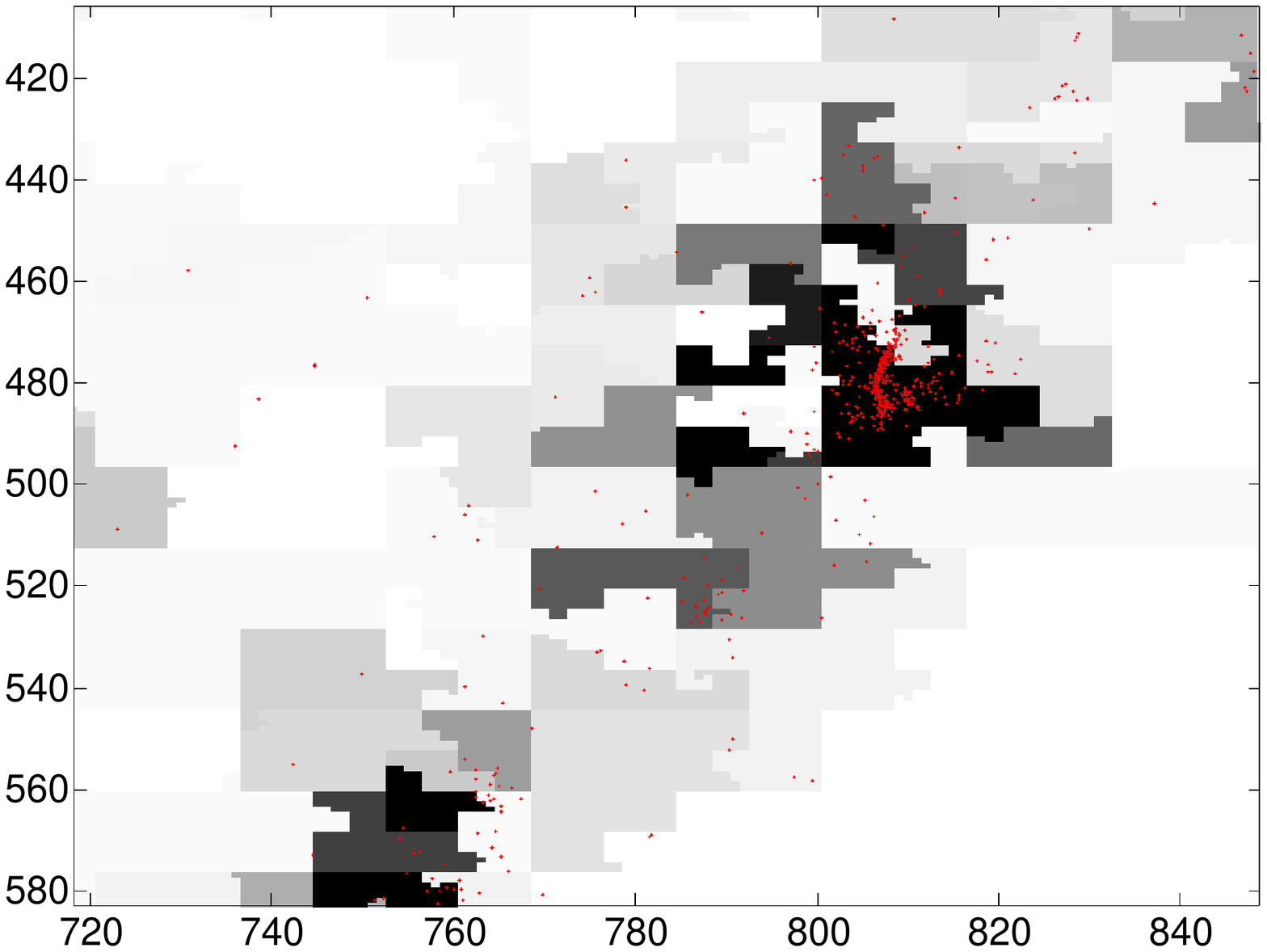}
\caption{A zoom-in view of Fig. \ref{fig:densmaphist}.} \label{fig:densmaphistzoom}
\end{figure}

\begin{figure}[!h]
\includegraphics[width=0.5\textwidth]{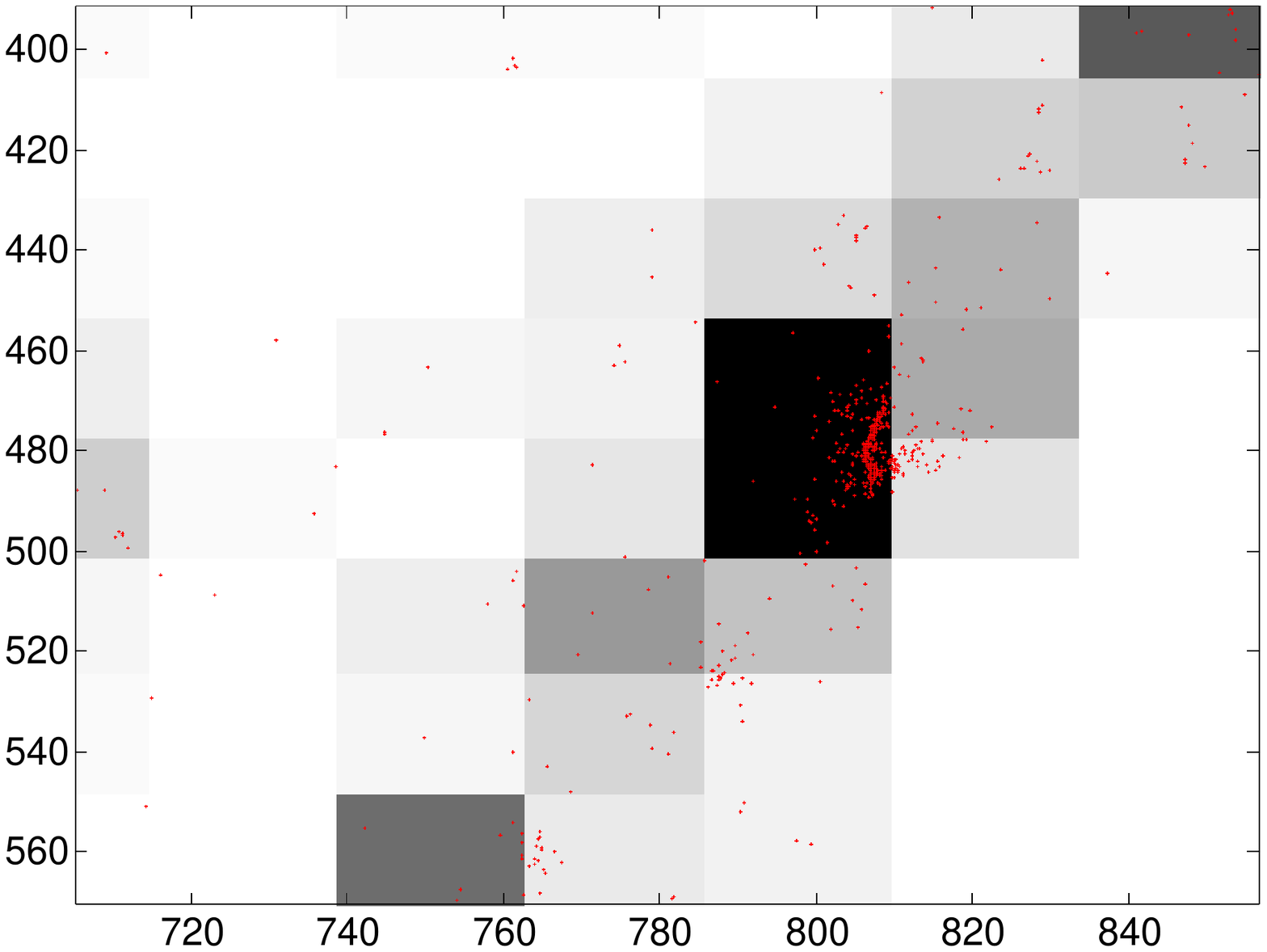}
\caption{A zoom-in view of Fig. \ref{fig:densmapours}.} \label{fig:densmapourszoom}
\end{figure}

The statistical difference, measured with $\ell_1$-norm and $\ell_2$-norm, between the two estimated density functions derived from the original and the mechanism's output are shown in Table \ref{tab:bestgroupsize}.
We remark that it is not easy to determine the optimal  bin size for the equi-width histogram prior to publishing.  Figure \ref{fig:histodistribution} shows that the optimal bin size differs significantly for three different datasets.

\begin{table}
\centering
\caption{The statistic difference between the estimated density function.}
\label{tab:bestgroupsize}
\begin{tabular}{|c|c|c|}
  \hline
   & equi-width histogram method & proposed method \\
  $\ell_1$-norm & 1.38 &  1.19  \\
  $\ell_2$-norm & 0.25 &  0.20  \\
  \hline
\end{tabular}
\end{table}

\begin{figure}[!h]
\includegraphics[width=0.5\textwidth]{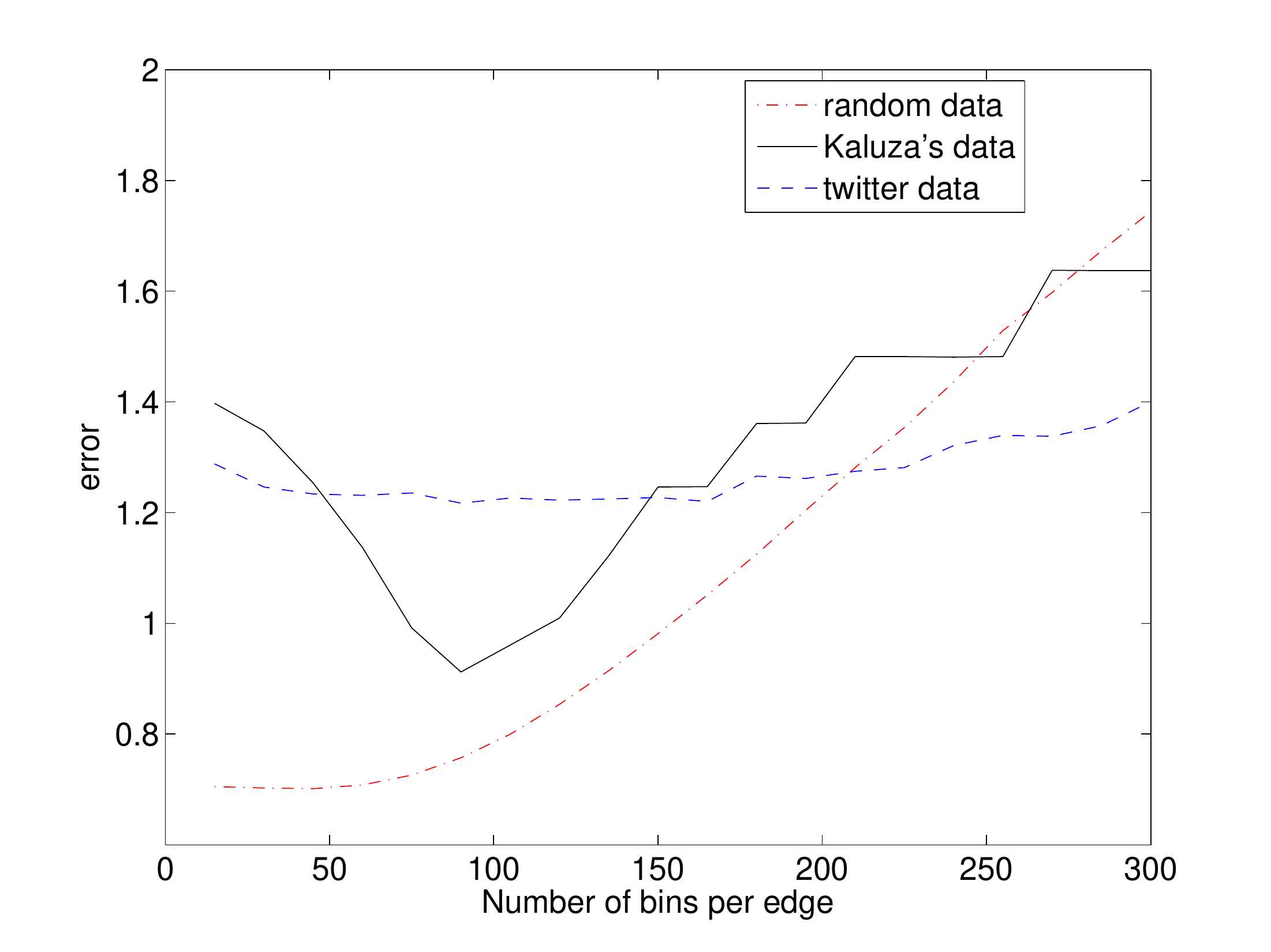}
\caption{The errors versus the bin size for different datasets. Each value on the graph is the average over 100 sample runs.}
\label{fig:histodistribution}
\end{figure}

\subsection{Range Query}

Many applications are interested to ask the total number of points within the query range in a dataset. It is desired to publish a noisy pointset meeting the privacy requirement, and yet able to provide accurate answers to the range queries.  Publishing an equi-width  histogram would not attain high accuracy if the size of the query ranges varies drastically.  Intuitively, wavelet-based techniques are  natural solutions to address such multi-scales queries. Xiao et al.~\cite{xiao2010differential} proposed a mechanism of adding Laplace noise to the coefficient  of a wavelet transformation of an equi-width histogram. The noisy wavelet coefficients are then published, from which range queries can be answered. Essentially, what being published is a series of equi-width histograms  with different widths (scales). Note that  there are quite a number of parameters to be determined prior to publishing, including the widths at varies scales and the amounts of privacy budget they consumed.


The answer to range queries can also be inferred from the output of our mechanism. Given a range,  we can estimate the number of points within the range from the estimated density function (as described in Section \ref{sec:density_function}) by  accumulating the probability over the query region, and then multiply by the total number of points.


We compare the wavelet-based mechanism, our mechanism and the equi-width histogram mechanism on the Twitter location dataset. For each range query, the absolute difference between the the true answer and the answer derived from the mechanism's output is taken as the error. We only consider square range queries in our experiments. For each query size $y$, 1,000 randomly selected square ranges with width $y$ are taken as the queries, and the average error is shown in Fig. \ref{fig:rangeqeury}.

In this experiment, we use Haar wavelet, and perform wavelet transform on the equi-width histogram with $ 512\times 512$ bins. After that,  appropriate noise is added to ensure $\epsilon$-differential privacy. To incorporate the knowledge of the database's size $n$, the DC component of the wavelet transform is set to be exactly $n$.
Under this setting, the best group size for our mechanism is 51.

Observe that as all mechanisms know the exact value of $n$, the accuracy improve when the range of the query covers more than half of the dataset. As expected,
the wavelet-base method outperforms the equi-width histogram mechanism in larger size range queries, but performs badly for small range due to the accumulation of noise. Surprisingly, our mechanism outperforms the equi-width histogram method for small range queries, and outperforms the wavelet based method for all sizes. This is possibly due to the fact that the locations of our queries are uniformly randomly chosen over a continuous domain, and thus, it is very likely that the query boundaries do not match the bins, leading to large error.



\begin{figure}[!h]
\includegraphics[width=0.53\textwidth]{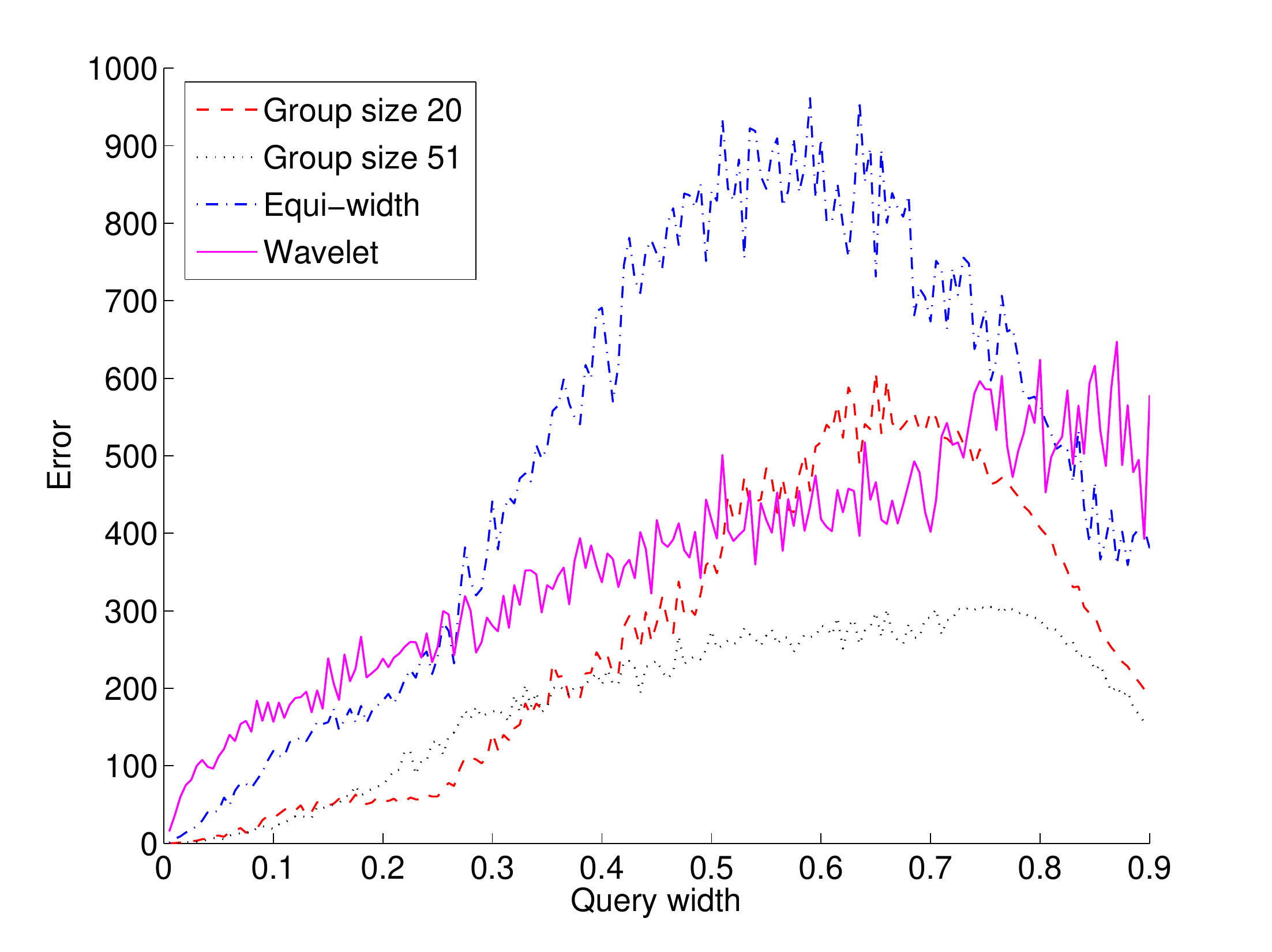}
\caption{The average range query error over 1000 random square range queries for each query size. The blue dash-dot line is the error of the equi-width histogram mechanism, the purple solid line is of the wavelet method, the red dashed line is our mechanism with group size 20 and the black dotted line is with group size 51.} \label{fig:rangeqeury}
\end{figure}

\subsection{Median}
Finding the median accurately in a differential private manner is challenging due to the high ``global sensitivity'': there are two datasets that differ by one element but having a completely different median. Nevertheless, for many instances, their ``local sensitivity'' are small. Nissim et al.~\cite{nissim2007smooth} showed that in general,  by adding noise proportional to the ``smooth sensitivity'' of the database instance, instead of the global sensitivity,  can also ensure differential privacy.  They also gave an $\Theta(n^2)$ algorithm that find the smooth sensitivity w.r.t. median.

Our mechanism  outputs the sorted sequence differential privately, it naturally gives the median. Compare to the smooth sensitivity-based mechanism, our mechanism can be efficiently carried out in $O(n)$ time with an sorted dataset.

We conduct experiments on  datasets of size  129 to compare the accuracy of both mechanisms.   Due to the quadratic running time in determining smooth sensitivity, we are unable to further investigate larger datasets.
The experiments are conducted for different local sensitivity and different $\epsilon$ values.
 To construct a dataset with different local sensitivity, 66 random numbers are generated with exponential distribution and then scaled to the unit interval. The dataset contains the 66 random numbers and 63 ones.
Figure \ref{fig:localsens} shows  the average noise level of both mechanisms on different local sensitivity, and Figure \ref{fig:epsilon} shows the noise level with different $\epsilon$ on a dataset that has a local sensitivity of $0.3$.

Observe that  when the local sensitivity of the median is high, our mechanism tends to provide a better result. In addition, our mechanism performs well under higher requirement of security: when the $\epsilon$ is smaller, the accuracy of our mechanism decreases slower than the smooth sensitivity-based method.

\begin{figure}[!h]
\includegraphics[width=0.53\textwidth]{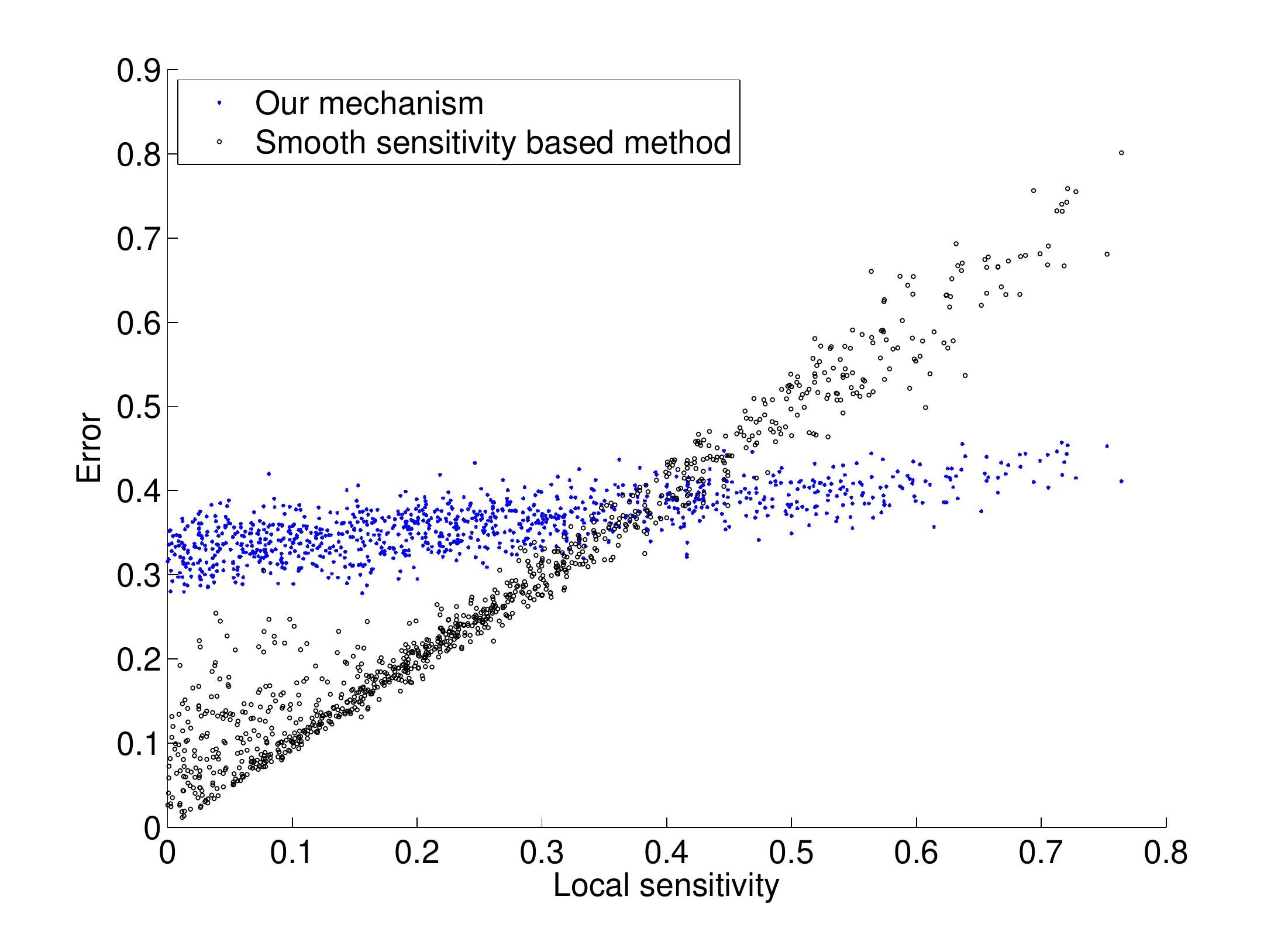}
\caption{The error of median output by the two mechanisms versus different local sensitivities. The blue dots are the error incurred by our mechanism and the black circles are the error incurred by the smooth sensitivity-based  mechanism. } \label{fig:localsens}
\end{figure}

\begin{figure}[!h]
\includegraphics[width=0.53\textwidth]{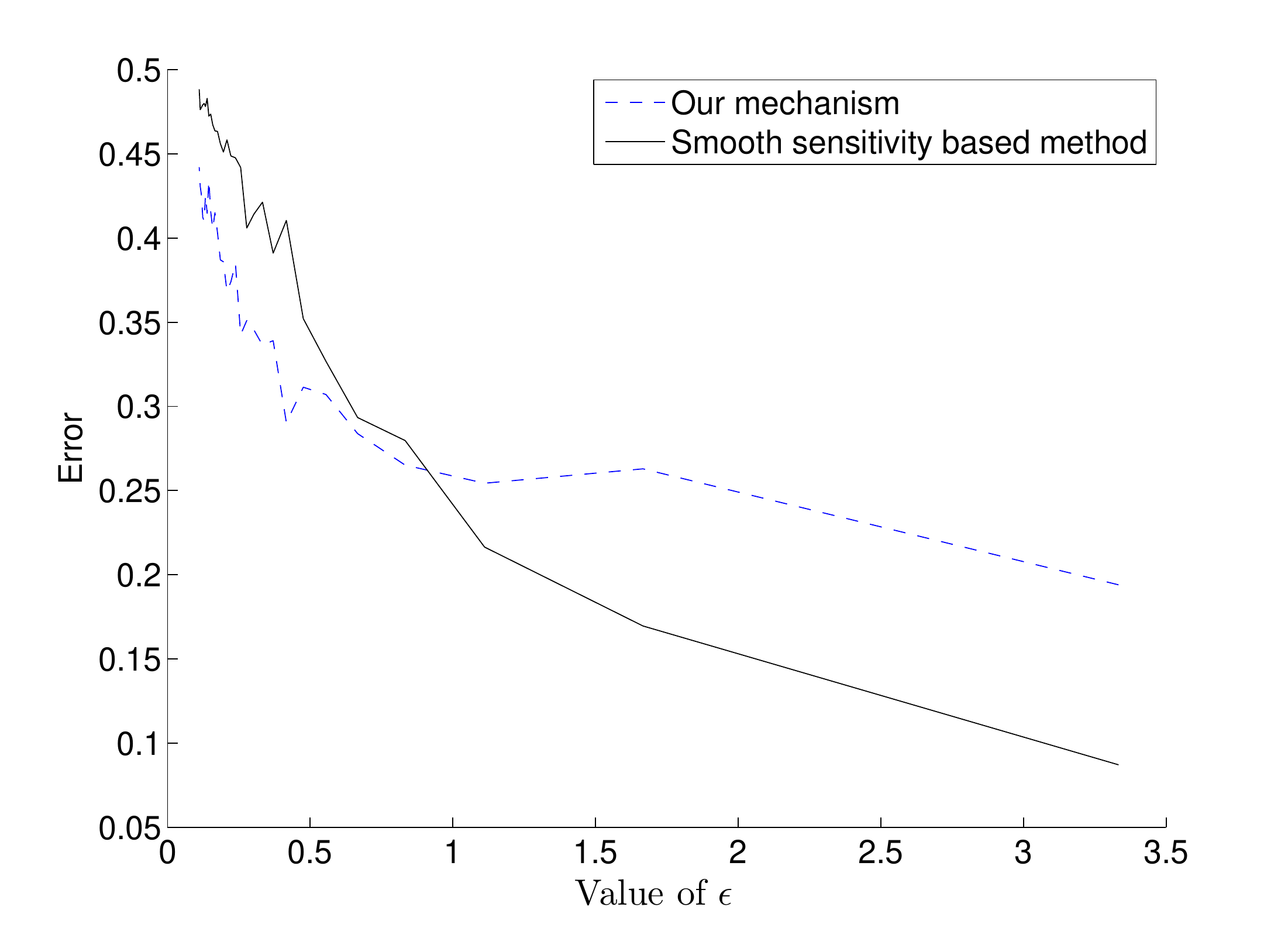}
\caption{The error of median versus different $\epsilon$. The blue dashed line and black solid line are the error incurred by our mechanism and smooth sensitivity-based  mechanism respectively. } \label{fig:epsilon}
\end{figure}

%


\section{Extension and Future Works}
\label{sec:extension}





\subsection{Hybrid Method}
The proposed mechanism can be viewed as the publishing of a ``fixed-depth'' histogram, where the number of elements in the histogram bins is fixed prior to publishing, whereas the mechanism outputs noisy bin's boundary. On the other hand, mechanisms based on frequency counts can be viewed as ``fixed-width'' histogram, where the boundary of the bins are fixed prior to publishing, whereas the mechanisms output noisy counts of elements in the bins.    The fixed-depth and fixed-width histogram could complement each other, by alternatively publishing one after another.  Here are two possibilities:

\subsubsection{Fixed-width-then-fixed-depth}
Let us take the Twitter location dataset  shown in Figure \ref{fig:NAorigindata} as an example. Observe that large portion of the region is sparse.  If the sparse region can be omitted, the sensitivity  of sorting would be significantly reduced.   This could be achieved by (1)first publishing an coarse equi-width  histogram with large width. (2) Next, for each bin, use the deterministic padding algorithm (Section \ref{sec:proposedapproach} Remark \ref{remark:2}) to extract $\widetilde{n}$ points, where $\widetilde{n}$ is the noisy count output by the equi-width histogram. (3) Finally, publish the extracted points  using our fixed-depth mechanism.  Note that the sensitivity for the fixed-depth mechanism is the width (or area) of the bin, which could be significantly smaller than the width (or area) of the  whole domain.
\subsubsection{Fixed-depth-then-fixed-width}
The unique solution of isotonic regression is a piecewise constant function. The steps in the solution lead to  artifacts of clustered data. It is interesting to investigate whether a subsequent fixed-width histogram could ``break'' the steps.

\subsection{Other  Metric}
It is interesting to investigate whether the proposed techniques can be applied to multidimensional data other than spatial data, for instance, tuples with attributes of age and gender.
\section{Related Work}
\label{sec:relatedwork}
There are extensive works on privacy-preserving data publishing.   The recent survey by Fung et al.~\cite{fung2010privacy}  gives a comprehensive overview on various notions, for example,  $k$-anonymity \cite{LSweeneyKAnonymity}, $\ell$-diversity \cite{lDiversity}, and differential privacy~\cite{dwork2006differential}.

Hay et al.\cite{hay2010boosting} proposed exploiting redundancies in the published data to boost accuracy, with supporting examples. One of the examples employs isotonic regression but in a way different from our mechanism.  They consider publishing  {\em unattributed histogram}, which is the (unordered) multiset of  the frequencies of a  histogram.  As the frequencies are unattributed (i.e. order of appearance is irrelevant), Hay et al. proposed publishing the sorted frequencies and later employing  isotonic regression to improve accuracy.   In contrast, our mechanism publishes the whole database.
It is no doubt that median is an important statistic. Finding median in a differentially private way is not easy due to the  large global sensitivity. Nissim et al.\cite{nissim2007smooth} introduced  the notion of smooth sensitivity and proposed an $\Theta(n^2)$ algorithm that computes the smooth sensitivity of an instance w.r.t. median.
Median has also  been used in the construction of other differential private mechanisms, for e.g.  dataset learning~\cite{blum2008learning} and  spatial decompositions~\cite{cormode2011differentially}.

\section{Conclusion and discussions}
\label{sec:conclusion}
%
%

Our mechanism is very simple from the publisher's point of view. The publisher just has to sort the points, group consecutive values, add Laplace noise and publish the noisy data. There is also minimal tuning to be carried out by the publisher. The main design decision is the choice of the group size $k$, which can be determined using our proposed noise models, and the locality preserving map which the classic Hilbert curve is suffice in attaining high accuracy.   Through empirical studies, we have shown that the published raw data contain rich information for the public to harvest, and provide high accuracy even for usages like median-finding, and range-searching that our mechanism is not initially designed for.  Such flexibility is desired for the need of ``{\em publish data, not the data mining result}'' as deliberated by Fung et al.~\cite{fung2010privacy}.

\bibliographystyle{abbrv}
\bibliography{diff}

\end{document}